\newtheorem{proposition}{Proposition}[section]
\newtheorem{lemma}{Lemma}[section]
\theoremstyle{definition}
\newtheorem{example}{Example}[section]
\theoremstyle{remark}
\newtheorem{remark}{Remark}[section]
\numberwithin{equation}{section}
\def\XXint#1#2#3{{\setbox0=\hbox{$#1{#2#3}{\int}$}
\vcenter{\hbox{$#2#3$}}\kern-.5\wd0}}
\DeclarePairedDelimiterX\MeijerM[3]{\lparen}{\rparen}%
{\begin{smallmatrix}#1 \\ #2\end{smallmatrix}\delimsixe\vert\,#3}
\newcommand\MeijerG[8][]{%
  G^{\,#2,#3}_{#4,#5}\MeijerM[#1]{#6}{#7}{#8}}
\newcommand\MeijerG*[7]{%
  G^{\,#1,#2}_{#3,#4}\MeijerM*{#5}{#6}{#7}}
\def\braket#1{\mathinner{\langle{#1}\rangle}}
\let\protect\relax
  \xdef\Braket{\protect\expandafter\noexpand\csname Braket \endcsname}
\gdef\csname Braket \endcsname#1{\begingroup
     \ifx\SavedDoubleVert\relax
       \let\SavedDoubleVert\|\let\|\BraDoubleVert
     \fi
     \mathcode`\|32768\let|\BraVert
     \left\langle{#1}\right\rangle\endgroup}
\def\BraVert{\@ifnextchar|{\|\@gobble}
     {\egroup\,\mid@vertical\,\bgroup}}
\def\BraDoubleVert{\egroup\,\mid@dblvertical\,\bgroup}
\let\SavedDoubleVert\relax
  \xdef\set{\protect\expandafter\noexpand\csname set \endcsname}
\gdef\csname set \endcsname#1{\mathinner
        {\lbrace\,{\mathcode`\|32768\let|\midvert #1}\,\rbrace}}
  \xdef\Set{\protect\expandafter\noexpand\csname Set \endcsname}
\gdef\csname Set \endcsname#1{\left\{%
     \ifx\SavedDoubleVert\relax \let\SavedDoubleVert\|\fi
     \:{\let\|\SetDoubleVert
     \mathcode`\|32768\let|\SetVert
     #1}\:\right\}}
\def\midvert{\egroup\mid\bgroup}
\def\SetVert{\@ifnextchar|{\|\@gobble}
    {\egroup\;\mid@vertical\;\bgroup}}
\def\SetDoubleVert{\egroup\;\mid@dblvertical\;\bgroup}
 \edef\@tempa{\meaning\middle}
 \edef\@tempb{\string\middle}
 \def\mid@vertical{\middle|}
 \def\mid@dblvertical{\middle\SavedDoubleVert}
 \def\mid@vertical{\mskip1mu\vrule\mskip1mu}
 \def\mid@dblvertical{\mskip1mu\vrule\mskip2.5mu\vrule\mskip1mu}
\begin{document}

\title[Products of matrices and renewing flows]{The continuum limit of some products of random matrices
associated with renewing flows}

\author{Yves Tourigny}
\address{School of Mathematics\\
        University of Bristol\\
        Bristol BS8 1UG, United Kingdom}
\email{y.tourigny@bristol.ac.uk}

\thanks{ORCID: 0000-0002-9646-0401. This work was initiated during a visit to the Laboratoire de Physique Th\'eorique et Mod\`eles Statistiques in the summer of 2023. I am grateful to the Universit\'{e} Paris--Saclay for supporting the visit financially, and to my long-time collaborators
Alain Comtet and Christophe Texier for many helpful discussions. The author has no relevant financial or non-financial interests to disclose. There is no data to be made available.}

\subjclass[2020]{Primary 60B15, Secondary 37N10}


\keywords{Renewing flows, products of random matrices in $\text{\rm SL} (d,{\mathbb R})$, generalised Lyapunov exponent}

\begin{abstract}
We consider the continuum limit of some products 
of random matrices in $\text{SL}(d,{\mathbb R})$ that arise as discretisations of incompressible renewing flows--- that is, of flows corresponding to a divergence-free velocity field that takes independent, identically-distributed
values in successive time intervals of duration proportional to $\tau$. The statistical properties of the product are encoded in its generalised Lyapunov exponent whose
computation reduces to finding the leading eigenvalue of a certain transfer operator.
In the continuum limit obtained by neglecting the terms of order $o(\tau^2)$, the transfer operator becomes a partial differential operator and, for a certain type of disorder which we call ``symmetric'', some calculations are feasible. For $d=2$,  we compute the growth rate of the product
in terms of complete elliptic integrals. By letting the elliptic modulus vary, we obtain a spectral problem, corresponding to a modulus-dependent random renewing flow, which may be viewed as a perturbation of the spectral problem for the angular Laplacian. In this way,
we deduce expansions for the generalised Lyapunov exponent in ascending powers of the modulus. These expansions generalise to the case $d \ge 2$, and we compute the first few terms explicitly for $d \in \{2,3\}$.
\end{abstract} 

\maketitle

\section{Introduction}

Products of random matrices arise as mathematical models of many disordered physical systems; see Part II of the monograph \cite{CPV}--- and the extensive bibliography therein--- where concrete applications to the study of Ising spin chains, transport in random media, chaotic dynamics etc. are described.
It is typical of such products that they grow at a deterministic, almost-sure rate as the number of elements in the product increases. This growth rate, known as the {\em Lyapunov exponent} of the product, is the most basic quantity of interest in the applications, as it usually provides a quantitative measure of some important aspect of the physical phenomenon being modelled; for instance the reciprocal of the localisation length in systems exhibiting Anderson localisation, or the rate at which nearby trajectories separate in a chaotic system.
However, the fact that the elements seldom commute makes its calculation challenging. A brief survey of the exact results that have been obtained in this direction may be found in \cite{CTT1}; such cases are few, and the calculation relies
on special features of the disorder--- as in Dyson's random chain model \cite{Dy} or Lloyd's tight-binding model \cite{Ll}--- and/or on taking the continuum limit of the product--- as in Halperin's study of a quantum particle in a white noise potential \cite{Ha}. The 
continuum limit of the two-by-two case is exceptional: the problem of computing the Lyapunov exponent was effectively solved in \cite{CLTT}.


If one looks beyond the almost-sure growth rate and seeks to ascertain the nature of the product's fluctuations then the relevant quantity is the {\em generalised Lyapunov exponent} \cite{CPV}, which will be defined shortly, and whose calculation presents an even greater challenge.
In \cite{CTT1}, we exhibited one case where detailed calculations were possible, namely the continuum limit of a product of random matrices in the group $\text{SL}(2,{\mathbb R})$ 
that models a quantum particle in a one-dimensional potential with impurities. It is the purpose of the present paper to describe another such case; only, this time,
the product under scrutiny concerns toy models for turbulent incompressible fluid flow in $d$-dimensional space, so that the matrices in the product belong to the group $\text{SL}(d,{\mathbb R})$. As we shall explain in the remainder of this introduction, the new case features the ingredients necessary to make progress:
(1) we consider a continuum limit of the product; (2) we assume that the disorder takes a highly symmetric form; (3) we embed the problem in a parameter-dependent family that contains a trivially solvable case.

\subsection{Motivation} 
\label{motivationSubsection}
Let us make clear at the outset that the results reported here make no direct contribution to fluid mechanics.
Nevertheless, since the project was initiated after reading the very interesting article of Haynes \& Vanneste \cite{HV}, it is proper to begin with a brief description of the physical context, as it provides both the motivation for considering these models, and the scope of the study to follow.

The diffusion of a scalar carried along by a flow in $d$-dimensional space is governed by the equation
\begin{equation}
\frac{\partial \rho}{\partial t} = u \cdot \nabla \rho + D\,\Delta \rho\,.
\label{advectionDiffusionEquation}
\end{equation}
Here, $\rho = \rho(x,t)$ denotes, say, a temperature or the density of some pollutant, $D$ is the diffusivity of the fluid in which the scalar is advected, and $u = u(x,t)$ is the fluid's velocity field. It is assumed that the scalar is ``passive'', meaning that its presence in the fluid does not affect the velocity field. 

To any given velocity field, one may associate the dynamical system
\begin{equation}
\dot{x} = {u} (x,t)\,, \;\; t > 0\,.
\label{dynamicalSystem}
\end{equation}
If we consider the paths followed by particles near a given trajectory $x=x(t)$ , we see that the separation, say $\delta x (t)$,  between the paths evolves locally according to the equations
\begin{equation}
\frac{d}{dt}  \delta x_i = \sum_{j=1}^d \frac{\partial u_i}{\partial x_j}  \delta x_j\,,\;\;1 \le i \le d \,. 
\label{tangentProcess}
\end{equation}
For the case where the velocity field is {\em random}, Haynes \& Vanneste examine the relationship between the large-deviation statistics of $\delta x (t)$ and the long-time behaviour of the two-point covariance function
\begin{equation}
C(x,x',t) := {\mathbb E} \left [ \rho(x,t) \rho (x',t) \right ]\,.
\label{covarianceFunction}
\end{equation}
More precisely, they consider divergence-free random velocity fields that are spatially homogeneous, so that the covariance function takes the form
\begin{equation}
C(x,x',t) = \Gamma (x'-x,t)
\label{gammaFunction}
\end{equation}
for some function $\Gamma$ of space and time; their aim is then to ascertain under what conditions knowledge of the flow's stretching characteristics suffices to determine the decay rate of $\Gamma(0,t)$ as $t \rightarrow \infty$. To this end, they look for velocity fields such that some closed equation for the temporal evolution of $\Gamma (x,t)$ may be derived and analysed. To construct velocity fields that meet these requirements, it is  convenient to assume that the scalar evolves in a spatial domain obtained by extending periodically a basic hypercube aligned with the axes, denoting by $P$ the (common) period in each of the coordinates.

\subsection{Kraichnan's velocity field ensemble}
\label{kraichnanSubsection}
With this motivation, Chetrite {\em et al.} \cite{CDG} consider a zero-mean Gaussian random velocity field process whose correlation matrix has entries of the form
\begin{multline}
{\mathbb E} \left [ u_i (x,t) \,u_j (x',t') \right ] \\
= \delta (t-t') \sum_{{\mathbf k} \in \frac{2 \pi}{P} {\mathbb Z}^d} \left [ (1-\vartheta) \,\delta_{ij} - (1-\vartheta d) \,k_{i} k_j \right ] e^{\text{\rm i} {\mathbf k} \cdot (x-x')} \varrho (|{\mathbf k}|)
\label{kraichnanCorrelation}
\end{multline}
for $1 \le i,j \le d$. Here $\varrho$ is some spectral density function, and  $0 \le \vartheta \le 1$ is a parameter, called ``degree of compressibility'', which equals zero if and only if the velocity field is divergence-free. The model is designed so that the spatial statistics of the resulting velocity field remain the same under every transformation of space that leaves the hypercube invariant; this has the remarkable consequence that, regardless of the dimension $d$, these statistics are characterised by just three parameters, denoted $\alpha$, $\beta$ and $\gamma$ by Chetrite {\em et al.}, and constrained by the inequalities
$$
\gamma \ge | \beta | \;\;\text{and}\;\;4 \alpha + (d+2) \beta + 2 \gamma \ge d | \beta |\,.
$$
The degree of compressibility may be expressed as
\begin{equation}
\vartheta = \frac{2 \alpha + (d+1) \beta + 2 \gamma}{2 \alpha + 2 \beta + d \gamma}
\label{incompressibilityDegree}
\end{equation}
so that, in the case where the velocity field is divergence free, only two ``free'' parameters remain.
Chetrite {\em et al.} also define the ``degree of anisotropy'' of the random flow by
\begin{equation}
[0,1] \ni \kappa := \left | \frac{\alpha}{\alpha + \beta + \gamma} \right |\,.
\label{anisotropyDegree}
\end{equation}
This terminology is justified by the fact that the spatial statistics of the velocity field are invariant under {\em every} rotation if and only if $\alpha=0$.

With this model, Equation (\ref{tangentProcess}) becomes a diffusion process on the group--- $\text{SL} (d,{\mathbb R})$ if $\vartheta=0$, $\text{GL} (d,{\mathbb R})$ otherwise--- that can be studied via its generator. This generator takes the form of a second-order partial differential operator in the variables that are used to parametrise the group. We shall return to Chetrite {\em et al.}'s  analysis of this model in due course.

\subsection{Renewing flows}
\label{renewingFlowsSubsection}
In the present paper, we study an alternative model. We begin by considering a $d$-dimensional random velocity field with components of the form
\begin{equation}
u_i (x,t) := \sum_{j \ne i} u_{ij} \left ( x_j + \eta_{ij} (t) \right )\,,\;\;1 \le i \le d\,,
\label{velocityField}
\end{equation}
where
\begin{enumerate}
\item the $u_{ij} : {\mathbb R} \rightarrow {\mathbb R}$ are periodic functions with a common period $P/m$ for some $m \in {\mathbb N}$, and zero mean over a period;
\item the $\{\eta_{ij} (t) \}_{t \ge 0}$ are independent copies of the same random process
$\{ \eta (t)\}_{t>0}$;
\item there exists some time $\delta t >0$ such that, for every $t,\,t' \ge 0$, the inequality $|t - t'| \ge \delta t$
implies that the random variables $\eta (t)$ and $\eta(t')$ are independent.
\end{enumerate}
After selecting a suitable reference length, the common period of the $u_{ij}$ may be taken to equal $2 \pi$ in dimensionless units. We shall henceforth assume that, for every $t > 0$, the random variable $\eta(t)$ is uniformly distributed in $[ 0,2 \pi)$; this ensures that the resulting velocity field process is spatially homogeneous.

Let us then discretise the resulting continuous dynamical system in the following way:
\begin{align*}
x_d^{n+1} &=  x_d^n + \delta t \sum_{j=1}^{d-1} u_{dj} \left ( x_j^n + \eta_{dj}^n \right ) \\
x_{d-1}^{n+1} &=  x_{d-1}^n + \delta t \sum_{j=1}^{d-2} u_{d-1,j} \left ( x_j^n + \eta_{d-1,j}^{n} \right ) + \delta t \,u_{d-1,d} \left ( x_{d}^{n+1} + \eta_{d-1,d}^{n} \right ) \\
 & \vdots \\
x_1^{n+1}  &= x_1^n + \delta t \sum_{j=2}^d u_{1j}  \left ( x_j^{n+1} + \eta_{1j}^{n} \right )
\end{align*}
where $n$ runs over the natural numbers, so that $x^n$ is the value of $x$ at time $t = t_n := n \delta t \,d$. This is equivalent to considering a piecewise-constant-in-time velocity field such that, in each of the subintervals 
\begin{equation}
{\mathcal I}_{n,i} := \left ( t_{n+1}-i \,\delta t,\,t_{n+1}-(i-1) \,\delta t \right )\,, \;\;i \in \{ 1,\,2,\,\ldots\,, d\}\,,
\label{timeInterval}
\end{equation}
the only non-vanishing velocity component  is the $i$th one.  Such flows are called ``random flows with renewal'' or, simply, ``renewing flows'' \cite{CG,ZMRS}. 

As a concrete example, for $d=2$, one recovers the ``sine flow'' example considered by Haynes \& Vanneste--- see \cite{HV}, Equations (3.17) and (4.8)--- 
by taking both $u_{12}$ and $u_{21}$ to be some multiple of the sine function; the resulting equation for $\Gamma (x,t)$ is derived in their \S 4. Other examples may be found in \cite{CFV}.

\subsection{Products of matrices from renewing flows}
\label{flowToProductSubsection}
The map $x^n \mapsto x^{n+1}$ defines a discrete dynamical system and, in this setting,
the separation between nearby trajectories is governed by the product of the successive Jacobian matrices, indexed by $n$ and evaluated at $x_n$.
For instance, in the case $d=2$, the Jacobian matrix, evaluated at $x_n$, is
\begin{equation}
\begin{pmatrix}
1 & \delta t\, u_{12}' \left ( x_2^{n+1} + \eta_{12}^n \right ) \\
0 & 1
\end{pmatrix}
\begin{pmatrix}
1 & 0 \\
\delta t\, u_{21}' \left ( x_1^n + \eta_{21}^n \right ) & 1
\end{pmatrix}
\label{2by2Matrix}
\end{equation}
where the $\eta_{ij}^n$ are independent draws from the uniform distribution on $(-\pi,\pi)$. 
An important consequence of the hypotheses made is that
the distribution of the Jacobian matrix, conditional on the event $x^n = x$, is independent of $x$. We may therefore replace the Jacobian matrix by the value, say $g_n$, that it takes at $x^n = 0$:
\begin{equation}
g_n := \begin{pmatrix} 1 & \delta t \,u_{12}' \left ( \phi_{12}^n  \right ) \\
0 & 1
\end{pmatrix}
\begin{pmatrix}
1 & 0 \\
\delta t \,u_{21}' \left (\phi_{21}^n \right ) & 1
\end{pmatrix}
\label{2dRandomMatrix}
\end{equation}
where
$$
\phi_{12} := \eta_{12} + \delta t\, u_{21} \left ( \eta_{21} \right )\;\;\text{and}\;\;\phi_{21} = \eta_{21}
$$
so that $g_1$, $g_2$, \dots are independent and identically distributed. 

The same reasoning applies in the higher-dimensional case; the Jacobian matrix of the map 
$x^n \mapsto x^{n+1}$, evaluated at $x^n=0$, has the same distribution as the matrix
\begin{equation}
g_n := \prod_{j \ne 1} e^{\delta t \,\xi_{1j}^n N_{1j}} \cdots \prod_{j \ne d} e^{\delta t \,\xi_{dj}^n N_{dj}}
\label{jacobian}
\end{equation}
where $N_{ij}$ denotes the $d \times d$ matrix whose only non-zero entry is a one in the $i$th row and $j$th column,
\begin{equation}
\xi_{ij} := u_{ij}' \left ( \phi_{ij} \right )\,,\;\;1 \le i \ne j \le d\,,
\label{tij}
\end{equation}
\begin{equation}
\phi_{ij} := \eta_{ij}\,, \;\;1 \le j < i \le d\,,
\label{lowerPhiAngles}
\end{equation}
and
\begin{equation}
\phi_{ij} := \eta_{ij} + \delta t\, \sum_{k \ne j} u_{jk} \left (  \phi_{jk} \right ) , \;\;1 \le i < j \le d\,.
\label{upperPhiAngles}
\end{equation}
To clarify the meaning of Equation (\ref{jacobian}): the order in which the $d$ products appear is important, but the order in which the terms in each product are multiplied is not because, for $i$ fixed, the $e^{\delta t \xi_{ij} N_{ij}}$, $j \ne i$, commute
amongst themselves.

It is proved  in Appendix \ref{propositionAppendix} that  the $\xi_{ij}$ are {\em independent}.
We are thus led to studying certain products
\begin{equation}
\Pi_n := g_n \cdots g_2 g_1
\label{productOfMatrices}
\end{equation}
of random independent and identically-distributed matrices with a particular structure: Firstly, the $g_n$ have unit determinant; this is a consequence of the fact that the velocity field defined by Equation (\ref{velocityField}) is divergence-free. Secondly, because of the way we chose to discretise the dynamical system,
each $g_n$ is itself a product of random independent triangular
matrices; this will facilitate the analysis to follow. 


\subsection{The generalised Lyapunov exponent}
\label{GLEsubsection}
Associated with the product (\ref{productOfMatrices}) is the {\em generalised Lyapunov exponent}:
\begin{equation}
L (\ell) := \lim_{n \rightarrow \infty} \frac{1}{n} \ln {\mathbb E} \left [ 1_\ell \left (x \Pi_n \right ) \right ] = \sum_{j=1}^\infty  c_j \frac{\ell^j}{j!}
\label{generalisedLyapunovExponent}
\end{equation}
where $x \in {\mathbb R}_\ast^d := {\mathbb R}^d \backslash \{0\}$ is interpreted as a row vector, and $1_\ell$ is the function defined on
${\mathbb R}_\ast^d$ by 
\begin{equation}
1_{\ell} (x) := \left | x \right |^{\ell} = \left ( x_1^2 + \cdots + x_d^2 \right )^\frac{\ell}{2}\,.
\label{unitFunction}
\end{equation}

$L$ may be viewed as a {\em scaled cumulant generating function} associated with the dimensionless random variable $\ln \left | {x} \Pi_n \right |$; it is in fact independent of the ${x}$ appearing in its definition. In terms of the renewing flow, $c_j/(\delta t \,d)$ is the discrete counterpart of
$$
\lim_{t \rightarrow \infty} \frac{1}{t} \ln {\mathbb E} \left \{ \left [ \ln \frac{\left | \delta x (t) \right |}{\left | \delta x (0) \right |}\right ]^j\right \}
$$
so that, in particular, $c_1/(\delta t\,d)$ measures the rate at which nearby trajectories separate. More generally, $L$ encodes the large-$n$ statistical properties of the product: if $\text{\rm d} s$ 
denotes an interval of infinitesimal length $ds$ centered at $s \in {\mathbb R}$, then
\begin{equation}
{\mathbb P} \left ( \frac{1}{n} \ln \left | x \Pi_n \right | \in \text{\rm d} s \right )  \propto \exp \left [ - n \,I(s) \right ]\,d s\;\;\text{as $n \rightarrow \infty$}\,.
\label{largeDeviation}
\end{equation}
where $I$, known variously as the ``large deviation'' \cite{De} or ``rate'' \cite{TH} or ``entropy'' \cite{CPV} or ``Cram\'{e}r'' \cite{HV} function associated with the product,
is the Legendre transform of $L$:
\begin{equation}
I (s) := \min_{\ell \in {\mathbb R}} \left [ \ell s - L (\ell) \right ]\,.
\label{rateFunction}
\end{equation}

\subsection{The transfer operator}
\label{transferOperatorSubsection}
To compute the generalised Lyapunov exponent, 
we first observe that the function $1_\ell$ defined by Equation (\ref{unitFunction}) is even and homogeneous of degree $\ell$--- namely,
for every non-zero number $r$,
$$
1_\ell (r x) = | r |^{\ell} 1_\ell ( x )\,.
$$
We denote by $V_\ell$ the space of even smooth functions on ${\mathbb R}_\ast^d$, homogeneous of degree $\ell$, and define a representation $T_\ell$ of the group $\text{SL}(d,{\mathbb R})$ with representation space $V_\ell$ by the formula
\begin{equation}
\left [ T_\ell (g) v \right ] (x) = v ( x g )\,.
\label{basicRepresentation}
\end{equation}
By using the fact that the $g_n$ are independent and identically distributed, it may be verified that
$$
{\mathbb E} \left [ \left | x \Pi_n \right |^{\ell} \right ] =  [ \underbrace{{\mathscr T}_\ell \circ \cdots \circ {\mathscr T}_\ell}_{\text{$n$ times}} 1_\ell  ] (x)
$$
where 
\begin{equation}
{\mathscr T}_\ell := {\mathbb E} \left [ T_\ell (g) \right ]
\label{transferOperator}
\end{equation}
is the {\em transfer operator}.

The rigorous mathematical treatment of the spectral properties of (non-self-adjoint, non-compact) linear operators in infinite-dimensional spaces is a difficult undertaking that involves subtle technical concepts. However, since our objective is of a practical, computational nature, it will be sufficient for our purpose to state the eigenvalue problem for the transfer operator naively as: find numbers $\lambda$ such that there exists a non-zero solution $v \in V_\ell$ of the equation
\begin{equation}
{\mathscr T}_\ell v = \lambda v\,.
\label{eigenvalueProblem}
\end{equation}
It then follows from the foregoing that 
\begin{equation}
L ( \ell ) = \ln \lambda ( \ell)
\label{spectralCharacterisation}
\end{equation}
where $\lambda (\ell)$ is the dominant eigenvalue of the transfer operator. This characterisation of the generalised Lyapunov exponent assumes that there is a ``spectral gap''--- that is, that the dominant eigenvalue is well separated from the rest of the spectrum.

\subsection{The continuum limit}
For the piecewise-constant-in-time velocity field considered in this paper, only the diagonal entries of the velocity field's correlation matrix are non-zero; a simple calculation shows that, for $1 \le i \le d$, they take the form
\begin{equation}
{\mathbb E} \left [ u_i (x,t) \,u_i (x',t') \right ] \\
=   \sum_{j \ne i} b_{ij} (x_j-x_j') \,\begin{cases} 1 & \text{if $t,\,t' \in {\mathcal I}_{n,i}$ for some $n \in {\mathbb N}$} \\ 0 & \text{otherwise} \end{cases}
\label{correlationFunction}
\end{equation}
where ${\mathcal I}_{n,i}$ is the time interval defined by Equation (\ref{timeInterval}) and the $b_{ij}$ are cosine series with positive coefficients.
If we multiply this identity by a suitable test function of $t'$ and integrate over $t'$, we see that the limit $\delta t \rightarrow 0+$, henceforth referred to as the {\em continuum limit}, corresponds to a velocity field that, as in the Kraichnan ensemble of \S \ref{kraichnanSubsection}, is delta-correlated in time.

Now, by virtue of Equation (\ref{jacobian}) and the independence of the $\xi_{ij}$, the transfer operator (\ref{transferOperator})
factorises as
\begin{equation}
{\mathscr T}_\ell =
\prod_{j \ne 1} {\mathbb E} \left ( e^{\delta t \,\xi_{1j}^n N_{1j}} \right ) \circ \cdots \circ \prod_{j \ne d} {\mathbb E} \left ( e^{\delta t \,\xi_{dj}^n N_{dj}} \right )\,.
\label{factorisation}
\end{equation}
In this formula, the elements $N_{ij}$ of the Lie algebra of $\text{\rm SL} (d,{\mathbb R})$ are now realised
as first-order differential operators associated with the representation $T_\ell$.
Expanding the transfer operator
(\ref{transferOperator}) in powers of $\delta t$, we arrive at
\begin{equation}
{\mathscr T}_\ell = 1 + \frac{\delta t^2}{2} \sum_{i \ne j} {\mathbb E} \left ( \xi_{ij}^2 \right ) N_{ij}^2
+ O \left ( \delta t^4 \right )\;\;\text{as $\delta t \rightarrow 0+$}\,.
\label{continuumLimit}
\end{equation}
Here, we have used the facts that, firstly, the $\xi_{ij}$ are mutually independent and, secondly,
$$
{\mathbb E} \left ( \xi_{ij} \right ) = \frac{1}{2 \pi} \int_{-\pi}^\pi d \theta \,u_{ij}' (\theta)
= \frac{1}{2 \pi} u_{ij} (\theta) \Bigl |_{-\pi}^\pi = 0
$$
since, by hypothesis, the $u_{ij}$ are $2 \pi$-periodic.

We shall say that the model exhibits {\em symmetric disorder} if there exists $\sigma>0$ such that
\begin{equation}
\forall\,\, 1 \le i \ne j \le d\,,\;\;{\mathbb E} \left ( \xi_{ij}^2 \right ) := \frac{1}{2 \pi} \int_{-\pi}^\pi d \theta\,\left [ u_{ij}' (\theta) \right ]^2 = \sigma^2\,.
\label{variance}
\end{equation}
Taking $\sqrt{2}/\sigma$ as our reference time, we see that, for symmetric disorder,
the continuum limit of the transfer operator is 
\begin{equation}
{\mathscr T}_\ell = 1 + \tau^2 \sum_{i \ne j} N_{ij}^2
\label{symmetricContinuumLimit}
\end{equation}
where
\begin{equation}
\tau := \frac{\sigma}{\sqrt{2}}\, \delta t
\label{dimensionlessParameter}
\end{equation}
is a dimensionless parameter, assumed small.


\subsection{Outline of the paper and main results}
\label{outlineSubsection}

In \S \ref{infinitesimalSection} we introduce the mathematical concepts we need to state and prove the
\begin{proposition}
\begin{equation}
\notag
\sum_{i \ne j} N_{ij}^2 = \Delta_K +  (d-1)\,\ell + \frac{d-1}{d} \,\ell^2 - \frac{1}{d} \sum_{i<j} A_{ij}^2
\end{equation}
where $\Delta_K$ is the Casimir of the subgroup $\text{\rm SO}(d,{\mathbb R})$ defined by
$$
\Delta_K := \sum_{i<j} K_{ij}^2
$$
and the $A_{ij}$, given by Formula (\ref{AinfinitesimalGenerators}), are the infinitesimal generators associated with the one-parameter subgroups of $\text{\rm SL} (d,{\mathbb R})$ consisting of diagonal matrices.
\label{casimirProposition}
\end{proposition}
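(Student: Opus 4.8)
The plan is to realise each Lie-algebra element as an explicit first-order differential operator in the representation $T_\ell$ and then reduce the identity to a bookkeeping calculation in the (non-commutative) algebra generated by the coordinates $x_i$ and the derivatives $\partial_i := \partial/\partial x_i$. Differentiating the defining formula (\ref{basicRepresentation}) along the one-parameter subgroup $t \mapsto e^{tX}$ gives, for any $X$ in the Lie algebra, the generator $\widehat{X} v (x) = \sum_{a,b} x_a X_{ab}\, \partial_b v(x)$; in particular the elementary nilpotent $N_{ij}$ (with $i \ne j$) acts as $x_i \partial_j$, the rotation generator $K_{ij}$ as $x_i \partial_j - x_j \partial_i$, and the diagonal generator $A_{ij}$ of (\ref{AinfinitesimalGenerators}) as $D_i - D_j$, where $D_i := x_i \partial_i$ (no summation). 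All of these preserve $V_\ell$, on which the Euler operator $E := \sum_i D_i$ acts as the scalar $\ell$; this last fact is precisely what converts operators into the numbers $\ell$ and $\ell^2$ appearing on the right-hand side.

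Next I would compute the left-hand side directly. Because $x_i$ and $\partial_j$ commute when $i \ne j$, one finds $N_{ij}^2 = x_i^2 \partial_j^2$, so that
\[
\sum_{i \ne j} N_{ij}^2 = \sum_{i \ne j} x_i^2 \partial_j^2 = r^2 \Delta - \sum_k x_k^2 \partial_k^2\,,
\]
where $r^2 := \sum_k x_k^2$ and $\Delta := \sum_k \partial_k^2$ is the Euclidean Laplacian. It then remains to identify the two terms on the right in terms of $\Delta_K$, $E$ and the $A_{ij}$.

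For the first term I would establish the classical radial--angular splitting $r^2 \Delta = \Delta_K + E(E + d - 2)$. This is cleanest to prove algebraically: expanding $K_{ij}^2 = (x_i \partial_j - x_j \partial_i)^2$ and summing over $i < j$ produces $\sum_{i \ne j} x_i^2 \partial_j^2$, a cross term $\sum_{i \ne j} x_i x_j \partial_i \partial_j$, and a first-order remainder; rewriting the cross term by means of the identity $E^2 = \sum_{i \ne j} x_i x_j \partial_i \partial_j + E + \sum_k x_k^2 \partial_k^2$ makes the $\sum_k x_k^2 \partial_k^2$ contributions cancel and yields the stated relation. For the second term I would use $x_i^2 \partial_i^2 = D_i (D_i - 1)$ together with the purely commutative identity $\sum_{i<j} A_{ij}^2 = \sum_{i<j}(D_i - D_j)^2 = d \sum_i D_i^2 - E^2$ (the $D_i$ mutually commute), which gives $\sum_k x_k^2 \partial_k^2 = \frac{1}{d} \sum_{i<j} A_{ij}^2 + \frac{1}{d} E^2 - E$.

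Substituting these two expressions into the formula above and collecting the $E$- and $E^2$-terms yields $\sum_{i \ne j} N_{ij}^2 = \Delta_K + (d-1) E + \frac{d-1}{d} E^2 - \frac{1}{d} \sum_{i<j} A_{ij}^2$, and replacing $E$ by $\ell$ on $V_\ell$ gives the claim. The only delicate point is operator non-commutativity: every step involves commuting derivatives past coordinate multiplications, and the main obstacle is simply to carry out the expansion of $\Delta_K$ and the cross-term identity without sign or ordering errors. Everything else reduces to arithmetic of scalars once the homogeneity substitution $E = \ell$ is invoked.
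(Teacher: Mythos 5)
Your proof is correct: each operator identity you invoke checks out, namely $N_{ij}^2 = x_i^2\partial_j^2$ for $i\ne j$, the cross-term identity $E^2 = \sum_{i\ne j}x_ix_j\partial_i\partial_j + E + \sum_k x_k^2\partial_k^2$, the radial--angular splitting $r^2\Delta = \Delta_K + E(E+d-2)$ with $\Delta$ the Euclidean Laplacian, $x_i^2\partial_i^2 = D_i(D_i-1)$, and $\sum_{i<j}A_{ij}^2 = d\sum_i D_i^2 - E^2$; collecting terms indeed yields $\Delta_K + (d-1)E + \tfrac{d-1}{d}E^2 - \tfrac{1}{d}\sum_{i<j}A_{ij}^2$, which becomes the claimed identity upon restriction to $V_\ell$. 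The raw ingredients coincide with the paper's (the expansions of the squared generators plus Euler homogeneity), but the organization is genuinely different. The paper restricts to $V_\ell$ from the outset: it records the first- and second-order homogeneity identities as Lemma \ref{casimirLemma}, expands $\sum_{i<j}A_{ij}^2$ as in (\ref{sumOfAsquared}), and then treats $\sum_j x_j^2\partial_j^2$ and $2\sum_{i<j}x_ix_j\partial_i\partial_j$ as two unknowns determined by a $2\times 2$ linear system, whose solution (\ref{formulaForY}) is substituted into the expansion (\ref{sumOfKsquared}) of $\Delta_K$. You instead keep everything at the operator level, invoking $E=\ell$ only in the final line, and you cut the computation along different seams: $\sum_{i\ne j}N_{ij}^2 = r^2\Delta - \sum_k x_k^2\partial_k^2$, with the first piece handled by the classical splitting of $r^2 \Delta$ into angular and radial parts, and the second by purely commutative algebra in the mutually commuting $D_i$, so that no linear system is needed. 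What your route buys: the appearance of $\Delta_K$ is explained by a well-known classical fact (it is the angular part of $r^2\Delta$) rather than emerging from the algebra, and the relation between $\sum_k x_k^2\partial_k^2$ and $\sum_{i<j}A_{ij}^2$ is isolated as a statement about the commutative subalgebra generated by the $D_i$. What the paper's route buys: it never needs to introduce the Euclidean Laplacian or establish the splitting, extracting both required identities at once from a single small system. Both arguments are elementary and of comparable length.
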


Let us elaborate the interest of this elementary proposition. Fix $k \ge 0$ and consider the companion spectral problem: Find $\mu$ such that there exists a non-zero solution $v \in V_\ell$
of
\begin{equation}
 \mu v =  \left ( \Delta_K - k^2 \sum_{i<j} A_{ij}^2 \right ) v\,.
 \label{perturbedSpectralProblem}
\end{equation}
Equation (\ref{spectralCharacterisation}) then implies that
\begin{equation}
L (\ell) = \sum_{j=1}^\infty c_j  \frac{\ell^j}{j!} = \tau^2 \Lambda \left ( k ,\ell \right ) {\Bigl |_{k=1/\sqrt{d}}} + o \left ( \tau^2 \right )\;\;\text{as $\tau \rightarrow 0+$}
\label{generalisedLyapunovExpansion}
\end{equation}
where
\begin{equation}
\Lambda(k,\ell) = \sum_{j=1}^\infty \gamma_j (k) \frac{\ell^j}{j!}:= (d-1) \,\ell + \frac{d-1}{d} \,\ell^2 + \mu(k,\ell) 
\label{kDependentGeneralisedLyapunovExpansion}
\end{equation}
and $\mu(k,\ell)$ is the particular eigenvalue that maximises the expression on the right-hand side. We denote by
\begin{equation}
\Upsilon (k,s) := \min_{\ell \in {\mathbb R}} \left [ \ell s - \Lambda (k,\ell) \right ]
\label{legendreTransform}
\end{equation}
the Legendre transform of $\Lambda(k,\ell)$.
We then take advantage of the fact that the companion spectral problem (\ref{perturbedSpectralProblem}) is completely solvable when $k=0$ by embedding 
our original problem in a parameter-dependent family, where the parameter $k$ assumes ``small'' values. To facilitate the calculations, we work with the Iwasawa realisation, described in \S \ref{iwasawaSection}, of the basic representation, because it is in this realisation that the eigenfunctions of the Laplacian assume their simplest form.

In \S \ref{continuumLimit2dSection}, we study the $k$-dependent spectral problem (\ref{perturbedSpectralProblem}) when $d=2$. First, for fixed $0 \le k < 1$, we consider its small-$\ell$ limit. By making a change of variable, we obtain a problem 
that is solvable when $\ell=0$ and this eventually leads to explicit formulae for the first two cumulants:
\begin{equation}
\gamma_1 (k)=  2 \,\frac{\boldsymbol{\mathsf{E}}}{\boldsymbol{\mathsf{K}}} - 1
\label{growthRateForDis2}
\end{equation}
and
\begin{equation}
\frac{\gamma_2(k)}{2} =  \frac{3}{2}- k^2 - \frac{\boldsymbol{\mathsf{E}}}{\boldsymbol{\mathsf{K}}} + \frac{2 \pi^2}{\boldsymbol{\mathsf{K}}^2} \sum_{j=1}^\infty \left ( \frac{q^{2j}}{1-q^{4j}} \right )^2
- \frac{6 \pi^2}{\boldsymbol{\mathsf{K}}^2} \sum_{j=1}^\infty \left ( \frac{q^{2j-1}}{1-q^{4j-2}} \right )^2
\label{varianceForDis2}
\end{equation}
where $\boldsymbol{\mathsf{K}}$ and $\boldsymbol{\mathsf{E}}$ are the complete elliptic integrals of the first and second kind respectively, evaluated at $k$, and $q$ is the corresponding nome; see Equation (\ref{nome}). 

Equations (\ref{growthRateForDis2}-\ref{varianceForDis2}) essentially reproduce formulae obtained previously by Chetrite {\em et al.} for the two-dimensional Kraichnan model; see their sections 4.4 and 4.5 \cite{CDG}. Interestingly, Formula (\ref{growthRateForDis2}) also appears in connection with Anderson's model at zero energy and the Dirac equation with a random mass \cite{DG,ST,RT}. On the purely mathematical level, the study of the large-deviation statistics in all these models reduces to the same spectral problem
(\ref{perturbedSpectralProblem}); as we shall see, the only difference is that, in our model, the parameter $k$ assumes {\em real} rather than {\em imaginary} values. 

The fact 
that the parameter $k$ appears in formulae (\ref{growthRateForDis2}-\ref{varianceForDis2}) as the {modulus} of the elliptic integrals gives some confidence that its introduction may indeed be the ``right'' way of relating our spectral problem to that for $\Delta_K$.
Plots of 
\begin{equation}
\ell \mapsto \ell \,\gamma_1 \left ( k \right ) + \ell^2 \,\frac{\gamma_2 \left ( k \right )}{2} = \Lambda \left ( k,\ell \right ) + o \left ( \ell^2 \right )
\label{quadraticL}
\end{equation}
and of its Legendre transform
\begin{equation}
s \mapsto \frac{\left [ s-\gamma_1 \left ( k \right )\right ]^2}{2 \gamma_2 \left ( k \right )} = \Upsilon \left ( k, s \right ) + o \left ( s^2 \right )
\label{quadraticI}
\end{equation}
against their respective argument, for the case $k=1/\sqrt{2}$, are shown in Figure \ref{LFigure} (a). These formulae express the fact that, in the continuum limit, the distribution of the random variable
$\left | x \Pi_n \right |$ is approximately lognormal for large $n$ \cite{CPV}.

\begin{figure}[htbp]
  \centering
  \includegraphics[width=0.45\linewidth]{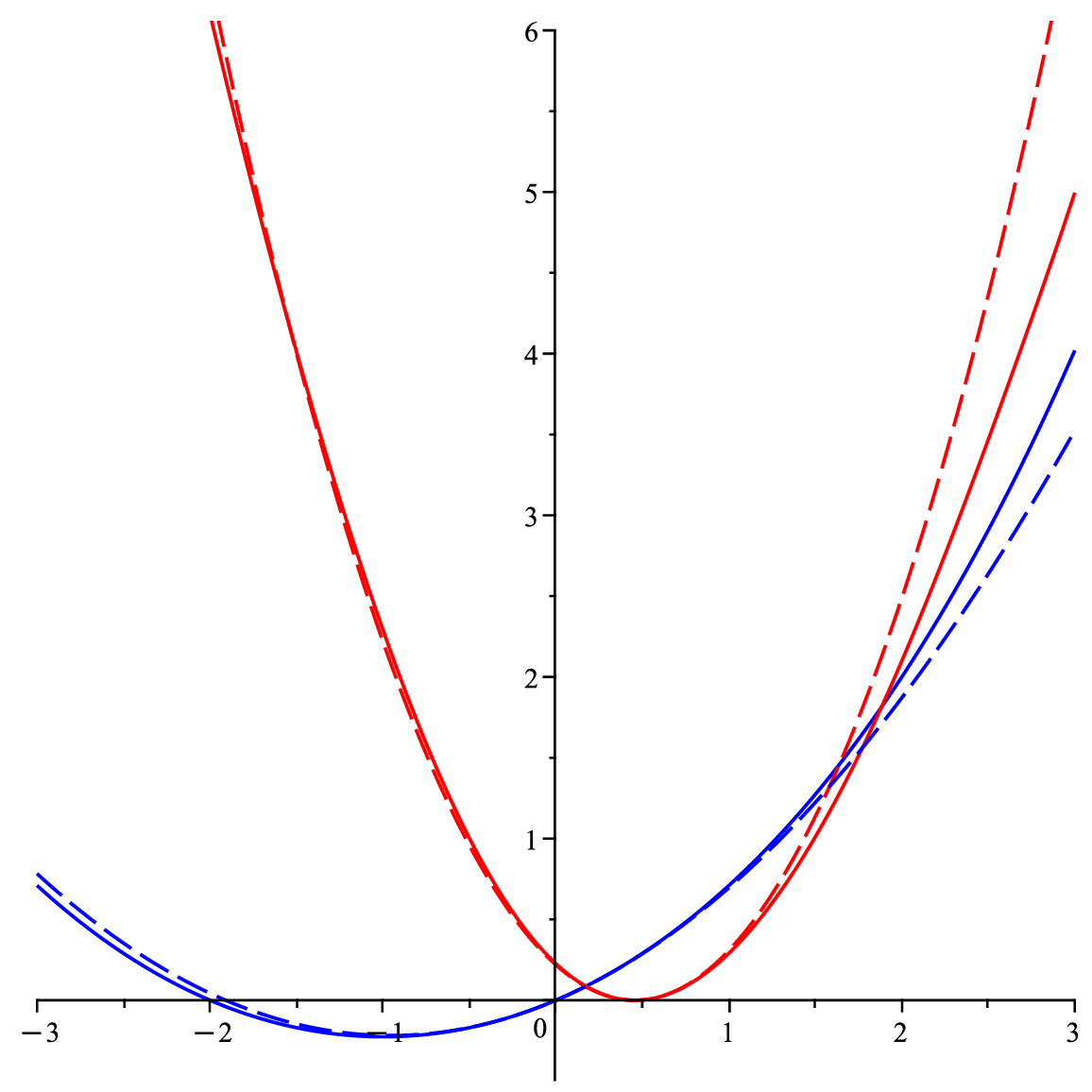}%
  \hspace{0.05\linewidth}%
  \includegraphics[width=0.45\linewidth]{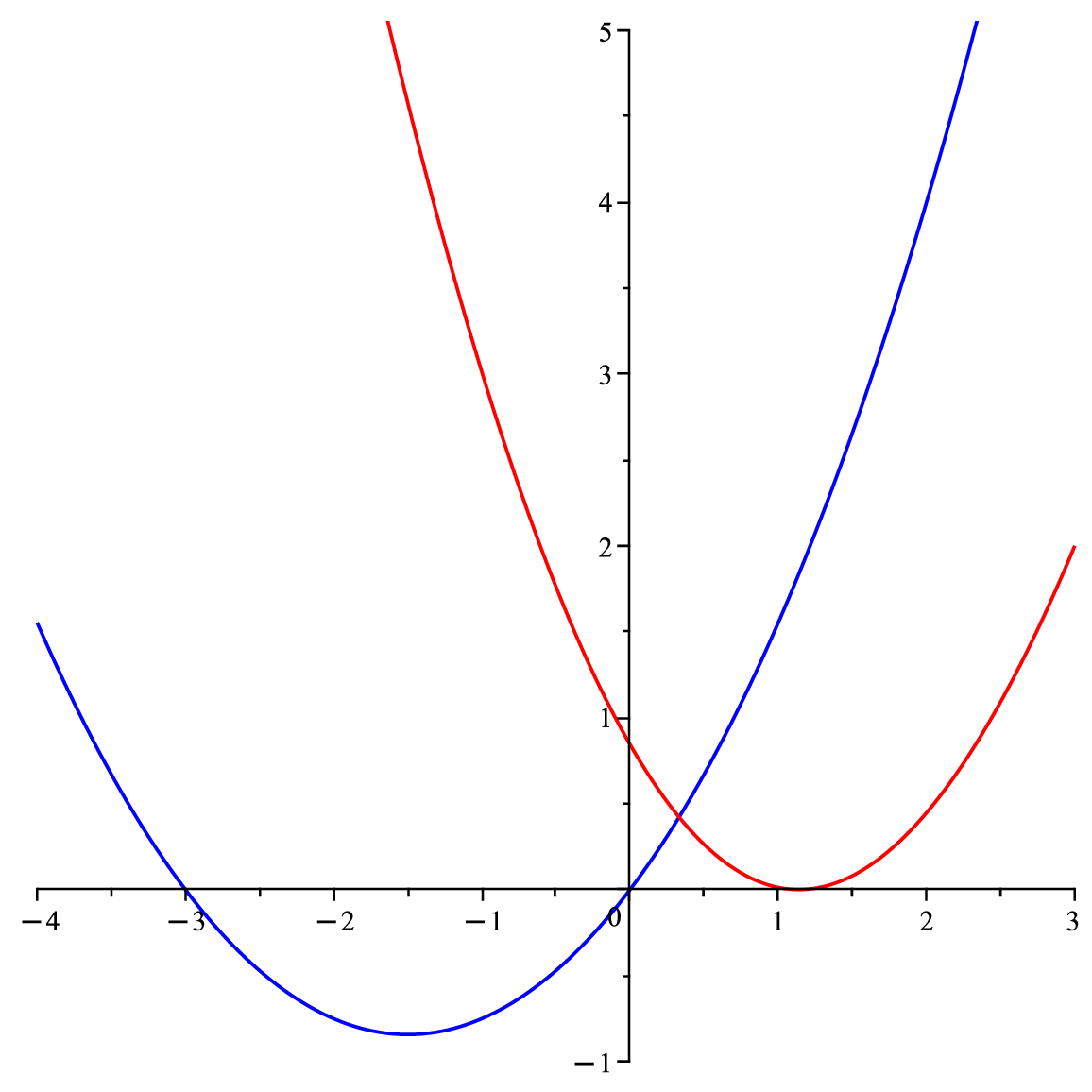}
  \begin{picture}(0,0) 
\put(-270,-7){(a)}
\put(-80,-7){(b)} 
\end{picture} 
  \caption{Plots of $\Lambda(1/\sqrt{d},\ell)$ against $\ell$ (solid blue curve) and $\Upsilon(1/\sqrt{d},s)$ against $s$
  (solid red curve) for (a) $d=2$ and (b) $d=3$. In (a) are also displayed the approximation (\ref{quadraticL})
  of $\Lambda(1/\sqrt{2},\ell)$ (dashed curve) and the approximation (\ref{quadraticI}) of $\Upsilon(1/\sqrt{2},s)$ (dot-dashed curve)
    which use only the first two cumulants.}
  \label{LFigure}
\end{figure}

Then, by considering the small-$k^2$ limit of the spectral problem (\ref{perturbedSpectralProblem}), we obtain expansions for $\Lambda$ and $\Upsilon$ in ascending powers of $k^2$: 
\begin{multline}
\Lambda(k,\ell) = \frac{\ell \left(\ell +2\right)}{2} -\frac{\ell \left(\ell +2\right)}{2} k^2  \\ + \frac{\left(\ell -2\right) \ell  \left(\ell +2\right) \left(\ell +4\right)}{128} k^4 
+ \frac{\left(\ell -2\right) \ell  \left(\ell +2\right) \left(\ell +4\right)}{256} k^6 \\
-\frac{\left(\ell -2\right) \ell  \left(\ell +2\right) \left(\ell +4\right)  \left(7 \ell^{4}+28 \ell^{3}+36 \ell^{2}+16 \ell -5248\right)}{2097152} k^8 \\
-\frac{\left(\ell -2\right) \ell \left(\ell +2\right) \left(\ell +4\right)  \left(21 \ell^{4}+84 \ell^{3}+108 \ell^{2}+48 \ell -7552\right)}{4194304} k^{10} + \cdots
\label{expansionForCGF2}
\end{multline}
and
\begin{multline}
\Upsilon(k,s) = \frac{\left(s -1\right)^{2}}{2} +\frac{\left(s -1\right) \left(s +1\right)}{2} k^2  \\  - \frac{s^4 -74 s^2 +9}{128} k^4  -\frac{\left ( 3 s^{2}-14 s +3\right ) \left ( 3 s^{2}+14 s +3\right )}{256} k^6 \\
+\frac{7 s^{8}+948 s^{6}-211974 s^{4}+1781972 s^{2}-47241}{2097152} k^8 \\
+\frac{133 s^{8}+15244 s^{6}-990274 s^{4}+4389324 s^{2}-67995}{4194304} k^{10} + \cdots
\label{expansionForRateFunction2}
\end{multline}

In \S \ref{continuumLimit3dSection}, we turn to the case $d=3$. In this case, there is no obvious change of variables that makes the problem tractable when $\ell=0$, and so we are unable to make progress in the small-$\ell$ limit. In the small-$k^2$ limit, however,  we can once again compute expansions 
for $\Lambda$ and its Legendre transform $\Upsilon$; we obtain
\begin{multline}
\Lambda(k,\ell) = \frac{2 \ell \left(\ell +3\right)}{3} -\frac{4 \ell \left(\ell +3\right)}{5} k^2  \\ + \frac{12 \left( \ell -2\right) \ell  \left(\ell +3\right) \left( \ell +5\right)}{875} k^4 \\
-\frac{72  \left(\ell -2\right) \ell \left(\ell +3\right) \left(\ell +5\right)  \left(9 \ell^{2}+27 \ell -395\right)}{3128125} k^6 + \cdots
\label{expansionForCGF3}
\end{multline}
and
\begin{multline}
\Upsilon(k,s) = \frac{3 \left(s -2\right)^{2}}{8} +\frac{9 \left(s -2\right) \left(s +2\right)}{20} k^2   \\
-\frac{243 s^{4}-36504 s^{2}+21168}{56000} k^4  \\ +\frac{59049 s^{6}-39724668 s^{4}+1597795632 s^{2}-421920576}{1601600000} k^6 + \cdots
\label{expansionForRateFunction3}
\end{multline}
Plots of $\Lambda(1/\sqrt{3}, \ell)$ against $\ell$ and $\Upsilon(1/\sqrt{3},s)$ against $s$ are shown in Figure \ref{LFigure} (b).

In \S \ref{pureStrainSection}, we show that, for $0 \le k^2 \le  1/d$, the function $\ell \mapsto \tau^2 \Lambda(k,\ell)$ is of intrinsic interest as the generalised Lyapunov exponent of (the continuum limit of) a product of random matrices associated with a renewing flow. Roughly speaking,
this renewing flow is obtained from the original one by inserting time intervals of pure strain in the $x_i-x_j$ planes, $i<j$; the number $1/d-k^2$ may then be interpreted as a measure of the strength of the disorder associated with the pure-straining, relative to the disorder in the original velocity gradient.

We conclude in \S \ref{concludingSection} with some brief comments on the possibility of developing the material in various directions.

\section{Infinitesimal generators}
\label{infinitesimalSection}

\subsection{The Lie algebra ${\mathfrak sl}(d)$}
\label{subgroupSubsection}
For the explicit construction and the analysis of the transfer operator, the following subgroups
will be particularly important:
\begin{enumerate}
\item {\em The compact subgroup $K := \text{\rm SO}(d,{\mathbb R})$} of unimodular orthogonal matrices. For $1 \le i < j \le d$, we denote by $k_{ij} (t)$
the element of $K$ corresponding to a rotation of angle $t$ in the $x_i-x_j$ plane. We can write
$$
k_{ij} (t) = e^{-t K_{ij}}
$$
for some $K_{ij}$ in the Lie algebra ${\mathfrak K}$ of $K$. It is clear that the $K_{ij}$ form a basis of 
${\mathfrak K}$. We have
$$
\dim {\mathfrak K} = \frac{d(d-1)}{2}\,.
$$
\item {\em The Abelian subgroup $A$} consisting of diagonal matrices of unit determinant
with positive entries along the diagonal. For $1 \le i < j \le d$, we denote by 
$$
a_{ij} (t) = e^{t A_{ij}}
$$ 
the one-parameter subgroup consisting of the diagonal matrices
with ones along the diagonal, except for the $i$th and $j$th entries, which are $e^{t}$ and $e^{-t}$
respectively. Clearly, the $A_{id}$, $1 \le i < d$, form a basis for the Lie algebra ${\mathfrak A}$ of $A$ and
$$
\dim {\mathfrak A} = d-1\,.
$$
\item {\em The nilpotent subgroups $N_-$ and $N_+$}  both consist of triangular
matrices with ones along the diagonal; the subscript $+$ (respectively $-$) being used for the subgroup of upper (respectively lower) triangular matrices. For $i \ne j$ we denote by 
$$
n_{ij} (t) = e^{t N_{ij}}
$$
the one-parameter subgroup consisting of those matrices in $\text{\rm SL} (d,{\mathbb R})$
with ones along the diagonal whose only non-zero off-diagonal entry is a $t$ in the $i$th row and $j$th column.
The $N_{ij}$ with $i<j$ (respectively $i>j$) form a basis of the Lie algebra ${\mathfrak N}^{+}$ (respectively ${\mathfrak N}^{-}$) of $N^{+}$ (respectively $N^-$), and we have
$$
\dim {\mathfrak N}^{\pm} = \frac{d (d-1)}{2}\,.
$$
\end{enumerate}

We can associate with each one-parameter subgroup $t \mapsto e^{tX}$
a differential operator acting on functions $v \in V_{\ell}$ via
$$
v \mapsto \frac{d}{d t} \left [ T_{\ell} \left ( e^{t X} \right ) v \right ] \Bigl |_{t=0}\,.
$$
This operator is called the {\em infinitesimal generator} of the representation $T_{\ell}$ associated with the one-parameter subgroup; it will be convenient to denote it also by the letter $X$.
The set of all such infinitesimal generators provides a realisation of the Lie algebra ${\mathfrak sl}(d)$ of $\text{SL}(d,{\mathbb R})$ in terms of first-order differential operators
in the $d$ coordinates $x_{j}$, $1 \le j \le d$, of ${x} \in {\mathbb R}_\ast^d$:
\begin{equation}
K_{ij} =   x_{i} \frac{\partial}{\partial x_{j}} - x_{j} \frac{\partial}{\partial x_{i}}\,,\;\;1 \le i < j \le d\,,
\label{KinfinitesimalGenerators}
\end{equation}
\begin{equation}
A_{ij} = x_{i} \frac{\partial}{\partial x_{i}} - x_{j} \frac{\partial}{\partial x_{j}}\,,\;\;1 \le i < j \le d\,,
\label{AinfinitesimalGenerators}
\end{equation}
and
\begin{equation}
N_{ij} = x_{i} \frac{\partial}{\partial x_{j}} \,,\;\;1 \le j \ne i \le d\,.
\label{NinfinitesimalGenerators}
\end{equation}

\subsection{The operator $\sum_{i \ne j} N_{ij}^2$ as a perturbation of the Casimir of $SO(d,{\mathbb R})$}
\label{casimirPropositionSection}
We are now ready to carry out the proof of Proposition \ref{casimirProposition}.
We begin by observing that,
by definition, the functions in $V_\ell$ are homogeneous of degree $\ell$, so that, for every $v \in V_\ell$, every $x \in {\mathbb R}_\ast^d$ and every $r >0$, there holds the identity
$$
v ( r x ) = r^{\ell} v(x)\,.
$$
Differentiating once with respect to $r$ produces
$$
\sum_{j} x_j \frac{\partial v}{\partial x_j} (r x) = \ell \,r^{\ell-1} v(x) \,.
$$
Differentiating again, we obtain
$$
\sum_{i,j} x_i x_j \frac{\partial^2 v}{\partial x_i \partial x_j} ( r x) =  ( \ell-1) \ell  \,r^{\ell-2} v(x)\,.
$$
Evaluating at $r=1$, we readily deduce the 
\begin{lemma}
For every $v \in V_\ell$, the following identities hold:
$$
\sum_{j} x_j \frac{\partial v}{\partial x_j} = \ell \,v
$$
and
$$
\sum_{j} x_j^2 \frac{\partial^2 v}{\partial x_j^2} + 2 \sum_{i<j} x_i x_j \frac{\partial^2 v}{\partial x_i \partial x_j} = (\ell-1) \ell \,v\,.
$$
\label{casimirLemma}
\end{lemma}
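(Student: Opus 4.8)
The plan is to recognise both identities as instances of Euler's theorem for homogeneous functions, obtained by differentiating the defining scaling relation with respect to the dilation parameter $r$ and then setting $r=1$. Since $v \in V_\ell$ is, by definition, homogeneous of degree $\ell$, it satisfies $v(rx) = r^\ell v(x)$ for every $x \in {\mathbb R}_\ast^d$ and every $r>0$; this single relation is the only structural input needed (in particular, the evenness built into $V_\ell$ plays no role here).

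First I would differentiate $v(rx) = r^\ell v(x)$ with respect to $r$. Applying the chain rule to the left-hand side produces $\sum_j x_j (\partial v/\partial x_j)(rx)$, while the right-hand side gives $\ell\, r^{\ell-1} v(x)$. Setting $r=1$ yields the first identity $\sum_j x_j\, \partial v/\partial x_j = \ell\, v$ immediately.

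For the second identity I would differentiate the relation a second time with respect to $r$. The left-hand side becomes $\sum_{i,j} x_i x_j (\partial^2 v/\partial x_i \partial x_j)(rx)$ and the right-hand side becomes $(\ell-1)\ell\, r^{\ell-2} v(x)$; evaluating again at $r=1$ gives $\sum_{i,j} x_i x_j\, \partial^2 v/\partial x_i \partial x_j = (\ell-1)\ell\, v$. It then remains only to reorganise the double sum by separating the diagonal terms $i=j$, which contribute $\sum_j x_j^2\, \partial^2 v/\partial x_j^2$, from the off-diagonal terms; because $v$ is smooth, the mixed partials are symmetric by Schwarz's theorem, so the off-diagonal contribution collapses to $2\sum_{i<j} x_i x_j\, \partial^2 v/\partial x_i \partial x_j$, giving the stated form.

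There is no substantive obstacle here: the argument is a routine double application of the chain rule to a dilation. The only points requiring a word of care are the legitimacy of differentiating the composition $v(rx)$ in $r$ (guaranteed by the smoothness imposed in the definition of $V_\ell$) and the interchange of mixed partial derivatives invoked to pass from the unrestricted sum over all pairs $(i,j)$ to $2\sum_{i<j}$.
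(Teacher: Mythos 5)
Your proof is correct and is essentially identical to the paper's: both differentiate the homogeneity relation $v(rx) = r^{\ell} v(x)$ once and twice with respect to $r$, evaluate at $r=1$, and split the resulting double sum into diagonal and off-diagonal parts using the symmetry of mixed partials. Your explicit remarks on smoothness and Schwarz's theorem merely spell out what the paper leaves implicit.
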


For the proof of the proposition, we use the formulae 
(\ref{KinfinitesimalGenerators}-\ref{NinfinitesimalGenerators}) for the infinitesimal 
generators associated with the representation $T_\ell$. Equation (\ref{AinfinitesimalGenerators})
yields
\begin{multline}
\sum_{i<j} A_{ij}^2 = \sum_{i<j} \left ( x_i \frac{\partial}{\partial x_i} - x_j \frac{\partial}{\partial x_j} \right ) 
\left ( x_i \frac{\partial}{\partial x_i} - x_j \frac{\partial}{\partial x_j} \right ) \\
= \sum_{i<j} \left ( x_i \frac{\partial}{\partial x_i} + x_j \frac{\partial}{\partial x_j} - 2 x_i x_j \frac{\partial^2}{\partial x_i \partial x_j} + x_i^2 \frac{\partial^2}{\partial x_i^2} + x_j^2 \frac{\partial^2}{\partial x_j^2} \right ) \\
= (d-1) \ell + (d-1) \sum_{j} x_j^2 \frac{\partial}{\partial x_j^2} - 2 \sum_{i<j} x_i x_j \frac{\partial^2}{\partial x_i \partial x_j}
\label{sumOfAsquared}
\end{multline}
where, to obtain the last equality, we have made use of the first identity in Lemma \ref{casimirLemma}.
This, together with the second identity in the lemma, provides two equations
for the ``unknowns''
$$
\sum_j x_j^2 \frac{\partial^2}{\partial x_j^2}\;\;\text{and}\;\;
2 \sum_{i<j} x_i x_j \frac{\partial^2}{\partial x_i \partial x_j}\,.
$$
We deduce the identities
\begin{equation}
\sum_j x_j^2 \frac{\partial^2}{\partial x_j^2} = \frac{(\ell -d) \ell}{d} + \frac{1}{d} \sum_{i<j} A_{ij}^2
\label{formulaForX}
\end{equation}
and
\begin{equation}
2 \sum_{i<j} x_i x_j \frac{\partial^2}{\partial x_i \partial x_j} = \frac{d-1}{d} \ell^2 - \frac{1}{d} \sum_{i<j} A_{ij}^2\,.
\label{formulaForY}
\end{equation}

Now, Equation (\ref{NinfinitesimalGenerators}) yields
\begin{equation}
\sum_{i \ne j} N_{ij}^2 = \sum_{i \ne j} x_i^2 \frac{\partial^2}{\partial x_j^2}\,.
\label{sumOfNsquared}
\end{equation}
We then deduce from 
Equation (\ref{KinfinitesimalGenerators}) that
\begin{multline}
\Delta_K := \sum_{i<j} K_{ij}^2 = \sum_{i<j} \left ( x_i \frac{\partial}{\partial x_j} - x_j \frac{\partial}{\partial x_i} \right ) 
\left ( x_i \frac{\partial}{\partial x_j} - x_j \frac{\partial}{\partial x_i} \right ) \\
= \sum_{i<j} \left ( x_i^2 \frac{\partial^2}{\partial x_j^2} + x_j^2 \frac{\partial^2}{\partial x_i^2} 
- x_i \frac{\partial}{\partial x_i} - x_j \frac{\partial}{\partial x_j} - 2 x_i x_j \frac{\partial^2}{\partial x_i \partial x_j} \right ) \\
= \sum_{i \ne j} N_{ij}^2 - (d-1) \ell -2 \sum_{i<j}  x_i x_j \frac{\partial^2}{\partial x_i \partial x_j}
\label{sumOfKsquared}
\end{multline}
The desired result then follows from Equation (\ref{formulaForY}).

\subsection{The replica trick}
\label{quasiSolvableSubsection}
This ``trick" refers to the well-known fact that there are special values of $\ell$ for which some part of the spectrum of the transfer operator may be obtained by purely algebraic means; see for instance \cite{Va} or \cite{CPV}, \S 2.4.2. 
The explanation is that the infinitesimal generators map polynomials to polynomials. Hence the
\begin{proposition}
Let $\ell$ be an even natural number. Then the subspace ${\mathcal P}_\ell$ of $V_\ell$, consisting of the homogeneous polynomials of degree $\ell$, is invariant under the representation $T_\ell$, and hence also
under the transfer operator ${\mathscr T}_\ell$ defined by Equation (\ref{transferOperator}) and the operator $\Delta_K - k^2 \sum_{i<j} A_{ij}^2$ in Equation (\ref{perturbedSpectralProblem}). 
\label{quasiSolvableProposition}
\end{proposition}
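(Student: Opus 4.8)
The plan is to establish the invariance directly from the defining formula $[T_\ell(g)v](x) = v(xg)$ of (\ref{basicRepresentation}), exploiting that $x \mapsto xg$ is a linear map of ${\mathbb R}_\ast^d$ and that pre-composing a homogeneous polynomial with a linear map yields another homogeneous polynomial of the same degree.

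First I would verify the inclusion ${\mathcal P}_\ell \subseteq V_\ell$. A homogeneous polynomial $v$ of degree $\ell$ is smooth on ${\mathbb R}_\ast^d$ and satisfies $v(rx) = r^\ell v(x)$; moreover $v(-x) = (-1)^\ell v(x)$, so the assumption that $\ell$ be even is precisely what guarantees that $v$ is even and hence lies in $V_\ell$. This is the only point at which the parity of $\ell$ enters.

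Next, fix $g \in \text{SL}(d,{\mathbb R})$ and $v \in {\mathcal P}_\ell$, and write $v(x) = \sum_{|\alpha|=\ell} c_\alpha x^\alpha$ in multi-index notation. The $j$th coordinate of $xg$ is the linear form $(xg)_j = \sum_i x_i g_{ij}$, so substituting these forms into a monomial $x^\alpha$ of total degree $\ell$ produces a homogeneous polynomial of degree $\ell$; summing over $\alpha$ shows that $[T_\ell(g)v](x) = v(xg)$ is again homogeneous of degree $\ell$, i.e. $T_\ell(g)v \in {\mathcal P}_\ell$. Hence $T_\ell(g){\mathcal P}_\ell \subseteq {\mathcal P}_\ell$ for every $g$, which is the asserted invariance under the representation. (Equivalently, and in line with the remark preceding the statement, one may argue infinitesimally: by (\ref{KinfinitesimalGenerators})--(\ref{NinfinitesimalGenerators}) each generator is a first-order operator built from the monomial operators $x_p\,\partial/\partial x_q$, each of which preserves total degree and hence maps ${\mathcal P}_\ell$ into itself; since $\text{SL}(d,{\mathbb R})$ is connected, every $T_\ell(g)$ is an exponential of such generators and therefore also preserves ${\mathcal P}_\ell$.)

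Finally, for the transfer operator I would invoke the finite-dimensionality of ${\mathcal P}_\ell$, which has the monomials $\{x^\alpha : |\alpha|=\ell\}$ as a basis. The restriction $T_\ell(g)|_{{\mathcal P}_\ell}$ is then a random endomorphism of ${\mathcal P}_\ell$ whose matrix entries are polynomials in the entries of $g$; taking expectations entrywise shows that ${\mathscr T}_\ell = {\mathbb E}[T_\ell(g)]$ of (\ref{transferOperator}) also restricts to an endomorphism of ${\mathcal P}_\ell$, so ${\mathscr T}_\ell {\mathcal P}_\ell \subseteq {\mathcal P}_\ell$. I anticipate no genuine obstacle, the argument being elementary throughout; the only points meriting a word of care are the parity hypothesis just noted and the existence of the expectation. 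The latter holds because the entries of $g$ possess finite moments of all orders: each factor $e^{\delta t\,\xi_{ij} N_{ij}}$ equals $I + \delta t\,\xi_{ij} N_{ij}$ by nilpotency of $N_{ij}$, and $\xi_{ij} = u_{ij}'(\phi_{ij})$ of (\ref{tij}) is bounded, being the derivative of a smooth $2\pi$-periodic function.
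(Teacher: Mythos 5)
Your proposal is correct, and its primary route differs from the paper's. The paper offers essentially a one-sentence justification---``the infinitesimal generators map polynomials to polynomials''---which is exactly the argument you relegate to a parenthesis: each generator in (\ref{KinfinitesimalGenerators})--(\ref{NinfinitesimalGenerators}) is a combination of the degree-preserving operators $x_p\,\partial/\partial x_q$, hence preserves ${\mathcal P}_\ell$. Your main argument works instead at the group level: since $x \mapsto xg$ is linear, precomposition in (\ref{basicRepresentation}) sends a homogeneous polynomial of degree $\ell$ to another one, so each $T_\ell(g)$ preserves ${\mathcal P}_\ell$ outright. This is more elementary and, in fact, more complete than the paper's sketch, because it never needs to pass from the Lie algebra back to the group. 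On that point, one small inaccuracy in your parenthetical alternative: not every $g \in \text{SL}(d,{\mathbb R})$ is a single exponential $e^{X}$ of a Lie algebra element (for $d=2$ the unipotent-times-$(-I)$ matrices with trace $-2$ are standard counterexamples); what connectedness gives is that every $g$ is a \emph{finite product} of exponentials, which suffices for the invariance argument. Your proposal also supplies two points the paper leaves tacit and which are genuinely worth recording: that the evenness of $\ell$ is precisely what places ${\mathcal P}_\ell$ inside $V_\ell$, and that the expectation defining ${\mathscr T}_\ell$ in (\ref{transferOperator}) exists and may be taken entrywise in the monomial basis of the finite-dimensional space ${\mathcal P}_\ell$, using the nilpotency of the $N_{ij}$ and the boundedness of the $\xi_{ij}$.
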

It follows in particular that the restriction of the transfer operator to ${\mathcal P}_\ell$ may be represented by a matrix of order
$$
\text{\rm dim} \,{\mathcal P}_\ell = \binom{\ell+d-1}{d-1}\,.
$$
Furthermore, since (for $\ell$ an even natural number) the function $1_\ell$ belongs to ${\mathcal P}_\ell$, the relevant eigenvalue $\lambda (\ell)$ of the transfer operator must coincide with the leading root of the characteristic polynomial of that matrix.

This remains true in the continuum limit leading to the spectral problem (\ref{perturbedSpectralProblem}), where the object is to compute $\mu(k,\ell)$, and we shall find upon inspection that, when $\ell$ is an even natural number, there exist non-trivial subspaces of ${\mathcal P}_\ell$, of much lower dimension, that contain $1_\ell$ and are invariant under the operator $\Delta_K - k^2 \sum_{i < j} A_{ij}^2$.

\section{The Iwasawa realisation}
\label{iwasawaSection}
Even homogeneous functions of $d$ variables are completely determined by the values they take on (the ``upper half'' of)  $S^{d-1}$, the unit hypersphere in $d$-dimensional space. The elements $g \in \text{SL}(d,{\mathbb R})$ act on $S^{d-1}$ as follows: for every $\xi \in S^{d-1}$,
\begin{equation}
\xi \cdot g :=  \frac{\xi g}{\left | \xi g \right |}\,.
\label{actionOnTheSphere}
\end{equation}

Let $T_{\ell}$ denote the representation in the space $V_{\ell}$ introduced in \S \ref{transferOperatorSubsection}. 
We denote by $Q$ the operator that assigns to every $v \in V_{\ell}$ its restriction $\widetilde{v}$ to $S^{d-1}$.
We then set
$$
\widetilde{V}_{\ell} := \left \{ Qv :\, v \in V_\ell \right \}\,.
$$ 
In particular, from the definition (\ref{unitFunction}) of $1_{\ell}$, we see that
$$
\forall\,\xi \in S^{d-1}\,,\;\;\widetilde{1}_{\ell} (\xi) = 1\,.
$$
We obtain the Iwasawa realisation $\widetilde{T}_{\ell}$ of $T_{\ell}$ on the representation space $\widetilde{V}_{\ell}$ via
$$
\widetilde{T}_{\ell} \,Q = Q \,T_{\ell}\,.
$$
Explicitly,
\begin{multline}
\notag
\left [ \widetilde{T}_{\ell} (g) \,\widetilde{v} \right ] (\xi) = \left [ \widetilde{T_{\ell} (g) \,v} \right ]  (\xi) = \left [ T_{\ell} (g) \,v \right ] \left ( \xi  \right ) = v \left (  {\xi} g \right ) \\
=  \left | \xi g \right |^{\ell} v \left ( \xi \cdot g \right )  = \left | \xi g \right |^{\ell} \widetilde{v} \left ( \xi \cdot g \right ) 
\end{multline}
where, in the penultimate equality, 
we have used the fact that $v$ is even and homogeneous of degree $\ell$. We shall henceforth drop the tilde.

\subsection{The fluctuation theorem}
Let $d \xi$ denote a measure on the hypersphere that is invariant under the action of the subgroup $K$. It may be shown--- see for instance \cite{FT}--- that, for every $g \in \text{SL}(d,{\mathbb R})$, 
\begin{equation}
d ( \xi \cdot g ) = \left | \xi g \right |^{-d} \,d \xi\,.
\label{jacobianOfTheAction}
\end{equation}
Using the inner product
$$
\braket{f | v} := \int_{S^{d-1}} d \xi \,\overline{f(\xi)} \, v(\xi)
$$
where the bar denotes complex conjugation, we may write
\begin{multline}
\notag
\braket{f | T_\ell (g) v} = \int_{S^{d-1}} d \xi\,\overline{f(\xi)} \, \left | \xi g \right |^\ell v \left ( \xi \cdot g \right ) \overset{\underset{\downarrow}{\xi' = \xi \cdot g}}{=} \int_{S^{d-1}} d \xi'\,\left | \xi g \right |^{d+\ell} \overline{f \left ( \xi' \cdot g^{-1} \right )} \, v \left ( \xi' \right ) \\ 
= \int_{S^{d-1}} d \xi'\,\left | \xi' g^{-1} \right |^{-d-\ell} \overline{f \left ( \xi' \cdot g^{-1} \right )} \, v \left ( \xi' \right )\,.
\end{multline}
This gives an obvious formula for the formal adjoint of the operator $T_\ell (g)$. If we then consider a random $g$, and average over the group, we deduce that
the operator
\begin{equation}
{\mathscr T}_\ell^\dag (g) := {\mathbb E} \left [ T_{-\overline{\ell}-d} \left ( g^{-1} \right ) \right ]
\label{adjointOfT}
\end{equation}
defined on $V_{-\overline{\ell}-d}$
is the formal adjoint of the transfer operator ${\mathscr T}_\ell$ defined on $V_\ell$ in the previous paragraph.

Now, for our particular product, the random matrix $g$ has a very particular structure, displayed in Equation (\ref{jacobian}): its inverse is obtained by changing the sign of $\delta t$ and reversing the order in which the factors occur. In the continuum limit where we neglect terms of order $o(\tau^2)$,
neither the sign of $\delta t$ nor the order matters. Since the spectrum of the adjoint is the complex conjugate of the spectrum, we deduce the
\begin{proposition}
For $k = 1/\sqrt{d}$, there holds
\begin{equation}
\Lambda \left (k, \ell \right ) = \Lambda \left (k,-\ell-d \right )\;\;\text{and}\;\;\Upsilon (k,s) = \Upsilon (k,-s) - s\,d\,.
\label{spectralSymmetry}
\end{equation}
\label{fluctuationProposition}
\end{proposition}
The proposition and its proof
are essentially taken from Vanneste's Proposition 2 \cite{Va}.
It is evident from the formulae presented in \S \ref{outlineSubsection} that the symmetry property (\ref{spectralSymmetry}) actually holds for every $k$. It is a particular instance of the {\em fluctuation theorem} which is well-known in non-equilibrium statistical mechanics; see \cite{De}, \S 10 and \cite{TH}, \S 11.3.3. 

\subsection{Polyspherical angles}
Almost every point in $\xi \in S^{d-1}$ may be expressed in the form
\begin{align*}
\xi_1 =& \sin \theta_d \cdots \sin \theta_3 \sin \theta_2 \\
\xi_2 =& \sin \theta_d \cdots \sin \theta_3 \cos \theta_2 \\
\vdots& \\
\xi_{d-1} =& \sin \theta_d \cos \theta_{d-1} \\
\xi_d =& \cos \theta_d
\end{align*}
for some
$$
\theta := \begin{pmatrix} \theta_2 & \cdots & \theta_d \end{pmatrix} \in
\Theta := \left \{ \theta \in {\mathbb R}^{d-1} :\,-\pi < \theta_2 < \pi\;\text{and}\;0<\theta_3 ,\ldots, \theta_d <\pi \right \}\,.
$$
The angles on the right-hand side are called {\em polyspherical angles} (see \cite{Vi}, Chapter IX).
Conversely, given the left-hand side, the polyspherical angles may be determined via the formulae
\begin{equation}
\cos \theta_j = \frac{\xi_j}{\sqrt{\xi_1^2 + \cdots \xi_j^2}}\;\;\text{and}\;\;\sin \theta_j = s_j \frac{\sqrt{\xi_1^2 + \cdots \xi_{j-1}^2}}{\sqrt{\xi_1^2 + \cdots \xi_j^2}}\,,\;\;1 < j \le d\,.
\label{polysphericalAngles}
\end{equation}
Here $s_j = 1$ for $j >2$ whilst $s_2$ may take the value $1$ or $-1$, depending on $\xi$. In what follows, we shall  think of functions in the
representation space as functions on $\Theta$.

\subsection{Calculation of the infinitesimal generators}
Let now $X \in {\mathfrak sl} (d)$. In polyspherical coordinates, the infinitesimal generator--- temporarily denoted $X'$ to distinguish it from the matrix $X$---
of the one-parameter subgroup
$$
g(t) = e^{t X} \in \text{SL}(d,{\mathbb R})
$$
takes the form
\begin{equation}
X' = \ell\, \dot{r} + \sum_{j=2}^d \dot{\theta}_{j} \frac{\partial}{\partial \theta_{j}}
\label{infinitesimalGeneratorRealisedOnTheStiefelManifold}
\end{equation}
for some $\theta$-dependent coefficients $\dot{r}$ and $\dot{\theta}_{j}$.  
To compute these coefficients, put
$$
\xi e^{tX} = r(t) \,\xi (t)\,,\;\;r(t) > 0\,,\;\;\xi (t) \in S^{d-1}\,.
$$
Differentiation with respect to $t$ yields, after setting $t=0$,
\begin{equation}
\xi X = \dot{r} \xi + \dot{\xi}
\label{firstInfinitesimal}
\end{equation}
where we have used the fact that $r(0)=1$ and $\xi(0) = \xi$. For simplicity, we have also written $\dot{r}$ and $\dot{\xi}$ instead of  $\dot{r}(0)$ and $\dot{\xi}(0)$. Denoting by $(\cdot,\cdot)$ the usual dot product in ${\mathbb R}^d$, we remark that
$$
1 = \left ( \xi (t), \xi (t) \right ) \implies \dot{\xi} \perp \xi\,. 
$$
Hence
\begin{equation}
\dot{r} = \left ( \xi X, \xi \right )
\label{formulaForRdot}
\end{equation}
and Equation (\ref{firstInfinitesimal}) becomes
\begin{equation}
\dot{\xi} = \xi X - \left ( \xi X, \xi \right ) \xi \,.
\label{secondInfinitesimal}
\end{equation}
Differentiation of the formula for $\cos \theta_j$ in Equation (\ref{polysphericalAngles}) then yields the following formula for the $\dot{\theta}_j$ in terms of the $\dot{\xi}_j$:
\begin{equation}
\dot{\theta}_j = \frac{\xi_1 \left ( \dot{\xi}_1 \xi_j - \xi_1 \dot{\xi}_j \right )+\cdots+\xi_{j-1} \left ( \dot{\xi}_{j-1} \xi_j - \xi_{j-1} \dot{\xi}_j \right )}{\sin \theta_j  \left ( \xi_1^2 + \cdots + \xi_{j}^2\right )^{\frac{3}{2}}} \,,\;\;1 < j \le d\,.
\label{formulaForDotTheta}
\end{equation}

\begin{example}
The infinitesimal generators, in the case $d=2$, are shown in the second column of Table \ref{2dInfinitesimalGeneratorTable}.
\label{iwasawaGeneratorsForKisOneAndDisTwoExample}
\end{example}

\begin{table}
\begin{tabular}{c c c}
& Iwasawa & Gauss \\
\hline
\hline
& & \\
$K_{12}$  &  $-\frac{d}{d \theta}$ & $-\ell z + (z^2+1) \frac{d}{dz}$  \\
& & \\
$A_{12}$ &   $-\ell \cos (2\theta) +\sin (2 \theta) \frac{d}{d \theta}$ & $\ell - 2 z \frac{d}{dz}$  \\
& & \\
 ${N}_{12}$ &  $(\ell/2) \sin (2 \theta) - (\sin \theta)^2 \frac{d}{d \theta}$  & $\frac{d}{d z}$   \\
 & & \\
 ${N}_{21}$  & $(\ell/2) \sin (2 \theta) + (\cos \theta)^2 \frac{d}{d \theta}$   & $\ell z -z^2 \frac{d}{d z}$  \\
 & & \\
 \hline
\\[0.125cm]
\end{tabular}
\caption{The case $d=2$: infinitesimal generators associated with the representation $T_\ell$ for various one-parameter subgroups.
The first column corresponds to the Iwasawa realisation and the second to the Gauss realisation of the representation. For simplicity, we have dropped the subscripts of $\theta$.}
\label{2dInfinitesimalGeneratorTable}
\end{table}

\begin{example}
For the case $d=3$, the infinitesimal generators are as follows:
$$
K_{12} = - \frac{\partial}{\partial \theta_2}
$$
$$
K_{13} = - \cos \theta_2 \cot \theta_3 \frac{\partial}{\partial \theta_2} -\sin \theta_2 \frac{\partial}{\partial \theta_3}
$$
$$
K_{23} = \sin \theta_2 \cot \theta_3 \frac{\partial}{\partial \theta_2} -\cos \theta_2 \frac{\partial}{\partial \theta_3}
$$
\newline
$$
N_{12} = \frac{\ell}{2} \sin (2 \theta_2)  (\sin \theta_3)^2-(\sin \theta_2)^2 \frac{\partial}{\partial \theta_2}
+ \frac{1}{4} \sin ( 2 \theta_2 ) \sin ( 2 \theta_3 ) \frac{\partial}{\partial \theta_3}
$$
$$
N_{13} = \frac{\ell}{2} \sin \theta_2 \sin (2 \theta_3) - \sin \theta_2 (\sin \theta_3)^2 \frac{\partial}{\partial \theta_3}
$$
$$
N_{23} = \frac{\ell}{2} \cos \theta_2 \sin (2 \theta_3) - \cos \theta_2 (\sin \theta_3)^2 \frac{\partial}{\partial \theta_3}
$$
\newline
$$
N_{21} = \frac{\ell}{2} \sin (2 \theta_2) (\sin \theta_3 )^2 + (\cos \theta_2 )^2 \frac{\partial}{\partial \theta_2}
+ \frac{1}{4} \sin (2 \theta_2 ) \sin (2 \theta_3) \frac{\partial}{\partial \theta_3}
$$
$$
N_{31} = \frac{\ell}{2} \sin \theta_2 \sin ( 2 \theta_3) + \cos \theta_2 \cot \theta_3 \frac{\partial}{\partial \theta_2} + \sin \theta_2 ( \cos \theta_3 )^2 \frac{\partial}{\partial \theta_3}
$$
$$
N_{32} = \frac{\ell}{2} \cos \theta_2 \sin ( 2 \theta_3) - \sin \theta_2 \cot \theta_3 \frac{\partial}{\partial \theta_2} + \cos \theta_2 ( \cos \theta_3 )^2 \frac{\partial}{\partial \theta_3}\,.
$$

Formulae for the infinitesimal generators $A_{ij}$ will be given in \S \ref{continuumLimit3dSection}.
\label{iwasawaGeneratorsForKisOneAndDisThreeExample}
\end{example}

\section{Two dimensions}
\label{continuumLimit2dSection}
In this section, we confine ourselves to the case $d=2$ and compute the dominant eigenvalue of the transfer operator by solving the companion eigenvalue problem (\ref{perturbedSpectralProblem}).
In the Iwasawa realisation, described in \S \ref{iwasawaSection}, the functions in the representation space are defined on the unit circle, parametrised by means of the angle $\theta$. From Table \ref{2dInfinitesimalGeneratorTable}, we see that Equation (\ref{perturbedSpectralProblem})
takes the concrete form
\begin{multline}
\mu v = \left [ 1- k^2 \sin^2 (2 \theta) \right ] \frac{d^2 v}{d \theta^2} \\ + (\ell-1) \,k^2 \sin ( 4 \theta) \frac{d v}{d \theta} 
 - k^2 \ell \left [ 2 \sin^2 ( 2 \theta) + \ell \cos^2 (2 \theta) \right ] v\,.
\label{perturbedSpectralProblemForDEqual2}
\end{multline}

It is clear that the differential operator on the right-hand side of Equation (\ref{perturbedSpectralProblemForDEqual2}) maps the space of even $(\pi/2)$-periodic functions to itself. Now, as we have seen, in the Iwasawa realisation the function $1_\ell$ is identically equal to $1$, and thus even and $(\pi/2)$-periodic.
Therefore, our eigenvalue problem is: Find $\mu$ such that Equation (\ref{perturbedSpectralProblemForDEqual2}) admits even $(\pi/2)$-periodic solutions. It will be convenient to normalise the corresponding eigenfunction by imposing the condition
\begin{equation}
v(0) =1\,.
\label{normalisationCondition}
\end{equation}
This eigenvalue problem is solvable when $\ell=0$ or when $k^2=0$. We can thus expand the solution in ascending powers of $\ell$ for $k^2$ fixed, or in ascending powers of $k^2$ for $\ell$ fixed.

\subsection{Calculation of $\Lambda (k,\ell)$ to $o(\ell^2)$} 
\label{smallEllSubsection}
For this purpose, it is convenient to introduce the new independent variable
$$
u = \int_0^\theta \frac{dt}{\sqrt{1-k^2 \sin^2(2t)}}
$$
and the new unknown $w$, related to $v$ via
$$
v(\theta) = w \left ( \int_0^\theta \frac{dt}{\sqrt{1-k^2 \sin^2(2t)}} \right )\,,
$$
so that Equation (\ref{perturbedSpectralProblemForDEqual2}) becomes
\begin{equation}
\mu w = w'' + 2 \ell k^2 \frac{\text{\rm sn}(2u) \,\text{\rm cn} (2u)}{\text{\rm dn} (2 u)} w'
- \ell k^2 \left [ 2 \,\text{\rm sn}^2 (2u) + \ell \,\text{\rm cn}^2(2u)\right ] w\,.
\label{spectralProblemInTermsOfW}
\end{equation}
Here, the prime denotes differentiation with respect to $u$, and $\text{\rm cn}$, $\text{\rm sn}$ and $\text{\rm dn}$ are the Jacobian elliptic functions; see \cite{DLMF}, Chapter 22, or \cite{La}, Chapter 2. It is easily verified that the requirement that $v$ be an even $(\pi/2)$-periodic function of $\theta$ translates into the requirement that $w$ be an even $\boldsymbol{\mathsf{K}}(k)$-periodic  function of $u$, where
\begin{equation}
\boldsymbol{\mathsf{K}}(k) := \int_{0}^{\frac{\pi}{2}} \frac{dt}{\sqrt{1-k^2 \sin^2 t}}
\label{ellipticK}
\end{equation}
is the complete elliptic integral of the first kind; see \cite{La}, \S 3.8.
We will often omit to indicate explicitly the dependence of $\boldsymbol{\mathsf{K}}$ on $k$.
The normalisation condition (\ref{normalisationCondition}) translates into
$$
w(0)=1\,.
$$

Let us now look for a solution of the form
\begin{equation}
\mu = \sum_{i=0}^\infty   \mu_{i} \,\ell^i\,,\;\;w(u) = \sum_{i=0}^\infty   w_i (u) \,\ell^i\,. 
\label{expansionInPowersOfL}
\end{equation}
Then
\begin{equation}
w_0''-\mu_0 \,w_0 =0\,,
\label{zerothOrder}
\end{equation}
\begin{equation}
w_1'' -\mu_0 \,w_1  =  \mu_1 \,w_0- 2 k^2 \left [ \frac{\text{\rm sn}(2u) \,\text{\rm cn} (2u)}{\text{\rm dn} (2 u)} \,w_0' -  \text{\rm sn}^2 (2 u) \,w_0 \right ]\,,
\label{firstOrder}
\end{equation}
and, for $i = 2\,,3,\,\ldots$,
\begin{multline}
 w_i'' - \mu_0 w_i = \sum_{j=1}^i \mu_j \,w_{i-j}  \\ - k^2 \left [  \frac{2 \,\text{\rm sn}(2u) \,\text{\rm cn} (2u)}{\text{\rm dn} (2 u)} \,w_{i-1}' 
- 2 \,\text{\rm sn}^2 (2 u) \,w_{i-1} -  \text{\rm cn}^2 (2 u) \,w_{i-2} \right ]\,.
\label{higherOrder}
\end{multline}
We deduce
\begin{equation}
\mu_0 = \mu_{0,n} := - \left ( \frac{2 \pi n}{\boldsymbol{\mathsf{K}}} \right )^2\,,\;\;w_0 (u) = w_{0,n}(u) := \cos \left (  \frac{2 \pi n}{\boldsymbol{\mathsf{K}}} u \right )\,,\;\;n \in {\mathbb N}\,.
\label{lowestOrderTerm}
\end{equation}
Bearing in mind Equation (\ref{generalisedLyapunovExpansion}), we see that the leading eigenvalue of the transfer operator is obtained by choosing $n=0$, the next largest is obtained by choosing $n=1$, and so on.

Now, fix $n \in {\mathbb N}$.
Let us indicate briefly how, in principle, higher-order terms may be computed by recurrence on $i$. The central point is that, for every $i$, $w_i$--- as well as the right-hand sides of 
Equations (\ref{firstOrder}-\ref{higherOrder})--- may be expressed as an infinite linear combination of the $w_{0,j}$, $j \in {\mathbb N}$. Our task is thus to determine these coefficients, as well as the value of $\mu_i$, knowing $w_l$ and $\mu_l$ for $l<i$.
For the equation satisfied by $w_i$ to be consistent, we must select the parameter $\mu_{i}$ so that the coefficient multiplying $w_{0,n}$ in the expansion of the right-hand side vanishes. Once $\mu_{i}$ has been determined, one can, for $j \ne n$, work out the coefficient
multiplying  $w_{0,j}$ in the expansion of $w_i$ by equating the coefficients on both sides. The only remaining coefficient in the expansion, namely that multiplying $w_{0,n}$, can be determined by imposing (for $i > 0$) the condition $w_i (0)=0$.

Let us give a partial illustration of this procedure for the case $i=1$.
Replacing $\mu_0$ by $\mu_{0,n}$
and $w_0$ by $w_{0,n}$ respectively in Equation (\ref{firstOrder}), we obtain, after some re-arrangement,
\begin{equation}
w_{1}'' - \mu_{0,n} \,w_{1} = \mu_{1}\,w_{0,n} + 2 k^2 \left [ \text{\rm sn}^2 (2u) \,w_{0,n} - \frac{\text{\rm sn}(2u) \,\text{\rm cn} (2u)}{\text{\rm dn} (2 u)} \,w_{0,n}' \right ]\,.
\label{firstOrderEquation}
\end{equation}
The right-hand side has an expansion of the form
$$
\sum_{j=0}^\infty r_j \cos \left (  \frac{2 \pi j}{\boldsymbol{\mathsf{K}}} u \right )\,.
$$
The consistency requirement $r_n=0$ produces
\begin{multline}
\mu_{1} \int_0^{\boldsymbol{\mathsf{K}}} d u \,w_{0,n}^2(u) \\
= -2 k^2 \int_0^{\boldsymbol{\mathsf{K}}} du \, w_{0,n} (u) \left [ \text{\rm sn}^2 (2u) \,w_{0,n} (u)- \frac{\text{\rm sn}(2u) \,\text{\rm cn} (2u)}{\text{\rm dn} (2 u)} \,w_{0,n}' (u)\right ]\,.
\end{multline}
Then, by using the formulae (2.16) and (2.23) from Chapter II of Oberhettinger's collection of Fourier expansions \cite{Ob}, we eventually deduce
\begin{equation}
\mu_1 = \mu_{1,n} := 2 \left ( \frac{\boldsymbol{\mathsf{E}}}{\boldsymbol{\mathsf{K}}}-1 \right ) + \left ( \frac{2 \pi}{\boldsymbol{\mathsf{K}}} \right )^2 \frac{n q^{2n}}{1-q^{4n}}\,,\;\; n \in {\mathbb N}\,,
\label{firstOrderEigenvalueCorrection}
\end{equation}
where 
\begin{equation}
\boldsymbol{\mathsf{E}}(k) := \int_{0}^{\frac{\pi}{2}} dt\,\sqrt{1-k^2 \sin^2 t}
\label{ellipticE}
\end{equation}
is the complete elliptic integral of the second kind--- see \cite{La}, \S 3.8--- and
\begin{equation}
\label{nome}
q  := \exp \left [ -\pi \frac{\boldsymbol{\mathsf{K}}(k')}{\boldsymbol{\mathsf{K}}(k)} \right ]
\end{equation}
is the so-called {\em nome}, with $k' := \sqrt{1-k^2}$. 

The calculation of $w_1$, and of higher-order terms for general $n$ is impractical. However, in the special case of interest--- namely $n=0$--- we obtain
\begin{equation}
w_{1} (u) = - \int_0^u dt \,\text{\rm zn} ( 2t)
\label{firstOrderEigenfunctionCorrection}
\end{equation}
where $\text{\rm zn}$ is Jacobi's zeta function; see \cite{La}, \S 3.6, and Formula (2.25) in \cite{Ob}. It then follows that, for $n=0$,
\begin{equation}
\mu_2 = \mu_{2,0} := 1-k^2 - \frac{\boldsymbol{\mathsf{E}}}{\boldsymbol{\mathsf{K}}} + \frac{2 \pi^2}{\boldsymbol{\mathsf{K}}^2} \sum_{j=1}^\infty \left ( \frac{q^{2j}}{1-q^{4j}} \right )^2 - \frac{6 \pi^2}{\boldsymbol{\mathsf{K}}^2} \sum_{j=1}^\infty \left ( \frac{q^{2j-1}}{1-q^{4j-2}} \right )^2\,.
\label{secondOrderEigenvalueCorrection}
\end{equation}

\begin{remark}
Elliptic functions are widely used in analytic number theory, and this leads to interesting alternative ways of expressing this last formula. For instance, if we introduce the divisor function (see \cite{DLMF}, \S 27.2)
$$
\sigma_\alpha (j) := \sum_{n | j} n^\alpha
$$
where the sum is over the divisors of the natural number $j$, then it may be shown that
$$
\sum_{j=1}^\infty \left ( \frac{q^{2j}}{1-q^{4j}} \right )^2 = \sum_{j=1}^\infty \sigma_1 (j) \,q^{4j}
$$
and
$$
\sum_{j=1}^\infty \left ( \frac{q^{2j-1}}{1-q^{4j-2}}  \right )^2 = \sum_{j=1}^\infty \sigma_1 (j) \,q^{2j} - \sum_{j=1}^\infty \sigma_1 (j) \,q^{4j}\,.
$$
The sum of the power series appearing on the right-hand side may be expressed in terms of the Eisenstein series $G_2$; see Apostol \cite{Ap}, \S 3.10.
\label{divisorRemark}
\end{remark}

\subsection{Expansion in powers of $k^2$}
\label{kexpansionSubsection}
As an alternative, let us look for solutions of Equation (\ref{perturbedSpectralProblemForDEqual2}) of the form
\begin{equation}
v(\theta) = \sum_{j=0}^\infty v_j \cos (4 j \theta)
\label{cosineSeries}
\end{equation}
so that the normalisation condition (\ref{normalisationCondition}) becomes
\begin{equation}
\sum_{j=0}^\infty v_j = 1\,.
\label{recurrenceNormalisationCondition}
\end{equation}
Substitution into the equation yields the following recurrence relation for the coefficients $v_j$:
\begin{equation}
k^2 a_j v_{j-1} + \left ( \mu+ 16 j^2 + k^2 b_j \right ) v_j + k^2 c_j v_{j+1} = 0\,,\;\;j \in {\mathbb N}\,,
\label{cosineRecurrenceRelation}
\end{equation}
where
\begin{equation}
a_j = \begin{cases} 
0 & \text{if $j=0$} \\
 (\ell/2-1) \ell & \text{if $j=1$} \\
( \ell/2+1-2 j )(\ell/2+2 -2j) & \text{otherwise}
\end{cases}
\label{matrixCoefficientA}
\end{equation}
\begin{equation}
b_j = \ell^2/2 +  \ell - 8 j^2 
\label{matrixCoefficientB}
\end{equation}
and
\begin{equation}
c_j =  ( \ell/2 + 1+ 2 j )(\ell/2+2 + 2j)\,.
\label{matrixCoefficientC}
\end{equation}

To proceed, we look for a solution of the recurrence relation of the form
\begin{equation}
\mu = \sum_{n=0}^\infty  \mu^{(n)} k^{2n}  \,,\;\;v_j = \sum_{n=0}^\infty  v_{j}^{(n)} k^{2n} \,.
\label{expansionInPowersOfkSquare}
\end{equation}
Obviously,
\begin{equation}
\left ( \mu^{(0)} + 16j^2 \right ) v_j^{(0)} = 0\,,\;\; j \in {\mathbb N}\,.
\label{kFirstRecurrenceRelation}
\end{equation}
Therefore, taking account of the normalisation condition (\ref{recurrenceNormalisationCondition}), we deduce
\begin{equation}
\mu^{(0)} = - (4l)^2\;\;\text{and}\;\; v_{j}^{(0)} = \delta_{lj}\,,\;\;j \in {\mathbb N}\,,
\label{zerothOrderInk}
\end{equation}
where, for every choice of $l \in {\mathbb N}$, $\mu^{(0)}$ is the limit, as $k^2 \rightarrow 0$, of a particular $k^2$-dependent eigenvalue branch $\mu$.
Then, for $n \ge 0$, the recurrence relation
\begin{equation}
16 \left (l^2-j^2\right ) v_{j}^{(n+1)} = a_j v_{j-1}^{(n)} + b_j v_{j}^{(n)} + c_j v_{j+1}^{(n)} + \mu^{(n+1)} v_j^{(0)} + \sum_{r=1}^{n} \mu^{(r)} v_{j}^{(n+1-r)}
\label{kSecondRecurrenceRelation}
\end{equation}
provides the successive correction terms for $l$ fixed. For consistency, the right-hand side of Equation (\ref{kSecondRecurrenceRelation}) must vanish when $j=l$. This determines the eigenvalue coefficient $\mu^{(n+1)}$, and hence
the $v_{j}^{(n+1)}$ for $j \ne l$. The normalisation condition then imposes
$$
v_{l}^{(n+1)} = - \sum_{j \ne l} v_{j}^{(n+1)}\,.
$$
There are at most $l+n+1$ non-zero terms in the sum on the right-hand side, so that, at each order, the calculation is completely explicit. 

For $l=0$--- the case of particular interest--- we find
\begin{multline}
\mu = \mu_0 := -\frac{\ell \left(\ell +2\right)}{2} k^2 \\ + \frac{\left(\ell -2\right) \ell \left(\ell +2\right) \left(\ell +4\right)}{128} k^4 
+ \frac{\left(\ell -2\right) \ell  \left(\ell +2\right) \left(\ell +4\right)}{256} k^6 \\
-\frac{\left(\ell -2\right) \ell \left(\ell +2\right)  \left(\ell +4\right) \left(7 \ell^{4}+28 \ell^{3}+36 \ell^{2}+16 \ell -5248\right)}{2097152} k^8 + \cdots
\label{muExpansion}
\end{multline}

This leads to the expansions (\ref{expansionForCGF2}-\ref{expansionForRateFunction2}). By re-arranging the terms in the former series and collecting them in powers of $\ell$, we deduce expansions for {\em all} the cumulants:
\begin{equation}
\gamma_1 = 1-k^{2}-\frac{1}{8} k^{4}-\frac{1}{16} k^{6}-\frac{41}{1024} k^{8}-\frac{59}{2048} k^{10} + \cdots
\label{firstCumulantExpansion}
\end{equation}
\begin{equation}
\frac{\gamma_2}{2} = \frac{1}{2} -\frac{1}{2} k^{2}-\frac{1}{32} k^{4}-\frac{1}{64} k^{6}-\frac{81}{8192} k^{8}-\frac{115}{16384} k^{10}+ \cdots
\label{secondCumulantExpansion}
\end{equation}
\begin{equation}
\frac{\gamma_3}{6} = \frac{1}{32} k^{4}+\frac{1}{64} k^{6}+\frac{169}{16384} k^{8}+\frac{251}{32768} k^{10} + \cdots
\label{thirdCumulantExpansion}
\end{equation}
\begin{equation}
\frac{\gamma_4}{24} = \frac{1}{128} k^{4}+\frac{1}{256} k^{6}+\frac{361}{131072} k^{8}+\frac{571}{262144} k^{10} + \cdots
\label{fourthCumulantExpansion}
\end{equation}
and so on. These expansions converge for $0 \le k^2 < 1$ and thus in particular, for $k^2=1/2$. 

We can also compute expansions for other eigenvalue branches. For instance, the branch corresponding to the next largest eigenvalue is obtained when $l=1$, and we find 
\begin{multline}
\mu = \mu_1 := -16 -\frac{\ell^2+2 \ell-16}{2} k^2 - \frac{5 \ell^{4}+20 \ell^{3}+44 \ell^{2}+48 \ell -1152}{768} k^4 \\ - \frac{5 \ell^{4}+20 \ell^{3}+44 \ell^{2}+48 \ell-1152}{1536} k^6 
+ \cdots
\notag
\end{multline}

\subsection{The replica trick}
\label{quasiSolvableSubsection2d}
For $d=2$, we find by inspection that if $\ell$ is an even natural number
then the finite-dimensional space
$$
\bigoplus_{j=0}^{[\ell/4]} \text{span} \left \{  \cos ( 4j \theta) \right \},
$$
where $[ \cdot ]$ denotes the integer part,
is invariant under the operator $A_{12}^2$. As explained in \S \ref{quasiSolvableSubsection}, it follows that the restriction of $d^2/d \theta^2 - k^2 A_{12}^2$ to this subspace may be represented by a finite matrix and, since $1_\ell$ belongs to that subspace, the leading eigenvalue, say $\mu_0$,  
of the matrix coincides with $\mu (k,\ell)$. 
In what follows, we write down the (monic version of the) characteristic polynomial for the first few special values of $\ell$; the leading root of its $k^2$-dependent characteristic polynomial admits an expansion in powers of $k^2$, 
and this will provide some check for the results obtained in the previous paragraph.

For $\ell=2$, the characteristic polynomial is $\mu+4 k^2$. Hence $\mu(k,2)=\mu_0=-4 k^2$.

For $\ell=4$, the characteristic polynomial is
$$
\mu^2 + 16 (1+k^2) \mu +192 k^2 = 0\,.
$$
The roots are
\begin{multline}
\notag
\mu(k,4) = \mu_0 = -8 k^2 - 8 + 8 \sqrt{k^4 - k^2 + 1} \\
= -12 k^{2}+3 k^{4}+\frac{3}{2} k^{6}+\frac{3}{16} k^{8}-\frac{15}{32} k^{10} + \cdots
\end{multline}
and
\begin{multline}
\notag
\mu_1 = -8 k^2 - 8 - 8 \sqrt{k^4 - k^2 + 1} \\
= -16-4 k^{2}-3 k^{4}-\frac{3}{2} k^{6}-\frac{3}{16} k^{8}+\frac{15}{32} k^{10} + \cdots
\end{multline}
These expansions converge for $k^2 < 1$.

For $\ell=6$, the characteristic polynomial is
$$
\mu^2 + 8 ( 5 k^2 + 2) \mu + 48 k^2 ( 3 k^2 + 8)\,.
$$
The roots are
\begin{multline}
\notag
\mu(k,6) = \mu_0 = -20 k^2 - 8 + 8 \sqrt{4 k^4 - k^2 + 1} \\
= -24 k^{2}+15 k^{4}+\frac{15}{2} k^{6}-\frac{165}{16} k^{8}-\frac{615}{32} k^{10} + \cdots
\end{multline}
and
\begin{multline}
\notag
\mu_1 =  -20 k^2 - 8 - 8 \sqrt{4 k^4 - k^2 + 1} \\
= -16-16 k^{2}-15 k^{4}-\frac{15}{2} k^{6}+\frac{165}{16} k^{8}+\frac{615}{32} k^{10} + \cdots
\end{multline}
These expansions converge for $k^2 < 1/2$. 

The reader will easily verify that these small-$k^2$ expansions do agree with the formulae presented in the previous paragraph. These exact results indicate that the radius of convergence decreases with $\ell$; in particular, for $\ell \ge 6$,
we expect that the expansions do not converge for $k^2=1/2$. The singularities on the circle of convergence occur at complex values of $k$ such that the ``spectral gap'' separating the dominant eigenvalue branch from the others
 shrinks to zero.
  
\subsection{The band-center anomaly for Anderson's model at zero energy}
For $k^2 = 1/2$, Formula (\ref{growthRateForDis2}) yields
\begin{equation}
\gamma_1 \left ( \frac{1}{\sqrt{2}}\right ) =  2 \frac{\boldsymbol{\mathsf{E}}\left (\frac{1}{\sqrt{2}} \right )}{\boldsymbol{\mathsf{K}}\left (\frac{1}{\sqrt{2}} \right )}-1 = 4 \left [ \frac{\Gamma(\frac{3}{4})}{\Gamma ( \frac{1}{4} )} \right ]^2 = \frac{1}{\pi G^2}
\label{gaussConstant}
\end{equation}
where $G$ denotes the Gauss lemniscate constant. The same (apart from a trivial factor) Lyapunov exponent comes out of the analysis of a one-dimensional Anderson model which
exhibits the Kappus--Wegner band-center anomaly at zero energy; see \cite{DG,KW, ST}. The formula also appears explicitly in \cite{CLTT}, \S 6.2, and \cite{RT} in connection with products whose terms are matrices of the particular form
\begin{equation}
\text{SL}(2,{\mathbb R}) \ni g_n = e^{\alpha_n K_{12}} \,e^{w_n A_{12}}
\label{supersymmetricMatrix}
\end{equation}
where $\alpha_n$ and $w_n$ are independent random variables with zero mean and respective variances $D_{\alpha \alpha}$ and $D_{ww}$. More precisely, Comtet {\em et al.} found that, in the limit as both variances tend to zero,
the Lyapunov exponent of the product equals
\begin{equation}
\Omega := \left ( D_{\alpha \alpha}+D_{ww}  \right ) \frac{\boldsymbol{\mathsf{E}}\left (k\right )}{\boldsymbol{\mathsf{K}}\left ( k\right )} - D_{\alpha \alpha}
\label{comtetFormula}
\end{equation}
where
\begin{equation}
\label{comtetModulus}
k^2 := \frac{D_{ww}}{D_{\alpha \alpha} + D_{ww}}\,.
\end{equation}
In their study of the one-dimensional Dirac equation with a random mass, Ramola \& Texier showed how to map the Anderson model at zero energy to a product of precisely this form; see \cite{RT}, \S 8.2. 
There only remains to clarify the relationship between the products studied in the present paper and those considered by Comtet {\em et al.} in their \S 6.2. 

To this end, let $k$ be defined by Formula (\ref{comtetModulus}), so that
\begin{equation}
\frac{k}{k'} := \frac{k}{\sqrt{1-k^2}} = \sqrt{\frac{D_{ww}}{D_{\alpha \alpha}}}\,.
\label{ellipticRatio}
\end{equation}
Denote by $\widetilde{L} (k,\ell)$ the generalised Lyapunov exponent associated with Comtet {\em et al.}'s product. In the continuum limit, it is given by the formula
$$
\widetilde{L} (k,\ell) = \widetilde{\tau}^2  \,\widetilde{\Lambda} (k,\ell) + o \left ( \widetilde{\tau}^2 \right ) \;\;\text{as $\widetilde{\tau} \rightarrow 0+$}
$$
where $\widetilde{\tau}^2 = D_{\alpha \alpha}/2$ and $\widetilde{\Lambda}(k,\ell)$ is the leading eigenvalue of the operator
$$
K_{12}^2 + \frac{D_{ww}}{D_{\alpha \alpha}} A_{12}^2 \,.
$$
Recalling Equations (\ref{perturbedSpectralProblem}) and (\ref{ellipticRatio}), we deduce the relationship
$$
\widetilde{\Lambda} (k,\ell) = \mu \left ( \text{\rm i} \frac{k}{k'}, \ell \right )
$$
where $\mu(\cdot ,\ell)$ was calculated in the foregoing subsections. By setting $n=0$ in Equation (\ref{firstOrderEigenvalueCorrection}), it follows in particular that
\begin{multline}
\notag
\widetilde{\tau}^2 \frac{\partial \widetilde{\Lambda}}{\partial \ell} (k,0) = \frac{D_{\alpha \alpha}}{2}\, 2 \left [ \frac{\boldsymbol{\mathsf{E}}\left (\text{\rm i} \frac{k}{k'}\right )}{\boldsymbol{\mathsf{K}}\left ( \text{\rm i} \frac{k}{k'} \right )} - 1 \right ] 
= D_{\alpha \alpha} \left [ \frac{1}{(k')^2} \frac{\boldsymbol{\mathsf{E}}\left (k \right )}{\boldsymbol{\mathsf{K}}\left ( k\right )} - 1 \right ] \\
= D_{\alpha \alpha} \left [ \frac{D_{\alpha \alpha} + D_{ww}}{D_{\alpha \alpha}} \frac{\boldsymbol{\mathsf{E}}\left (k \right )}{\boldsymbol{\mathsf{K}}\left ( k\right )} - 1 \right ] = \Omega
\end{multline}
where, to obtain the second equality, we have made use of Formula 19.7.2 in \cite{DLMF} :
$$
\boldsymbol{\mathsf{E}}\left (\text{\rm i} \frac{k}{k'}\right ) = \frac{1}{k'} \boldsymbol{\mathsf{E}}\left (k\right ) \;\;\text{and}\;\; \boldsymbol{\mathsf{K}}\left (\text{\rm i} \frac{k}{k'}\right ) = k' \boldsymbol{\mathsf{K}}\left (k\right )\,.
$$

Our calculations therefore translate into results pertaining to products involving matrices of the form (\ref{supersymmetricMatrix}).
It is interesting to note that the case of equal variances, relevant to the Kappus--Wegner band-center anomaly, lies precisely on the circle of convergence of the expansion presented in \S \ref{kexpansionSubsection}.

\subsection{Relationship to the Kraichnan model of \S \ref{kraichnanSubsection}}
As we have already indicated, formulae like Equations (\ref{growthRateForDis2}-\ref{varianceForDis2}) were found by Chetrite {\em et al.} for the Kraichnan model. Indeed, the analysis in their \S 4.5 leads to the following conclusion:
the rate function associated with the two-dimensional Kraichnan model may be expressed, in their notation, as the Legendre transform of
\begin{equation}
\nu \mapsto ( 2 \alpha + \beta + \gamma ) \nu (\nu+1)-2 (\alpha+\gamma) E_{\nu,0} 
\label{chetriteLyapunovExponent}
\end{equation}
where $E_{\nu,0}$ is the dominant eigenvalue of the operator
\begin{equation}
{\mathcal L}_\nu := 2 \alpha \left ( \sin \phi \frac{\partial}{\partial \phi}  - \nu \cos \phi \right )^2 + 2 \gamma \frac{\partial^2}{\partial \phi^2}
\label{chetriteOperator}
\end{equation}
acting on $\pi$-periodic functions. The substitution $\phi = 2 \theta$, $\nu = \ell/2$ brings this operator into the form (see Table \ref{2dInfinitesimalGeneratorTable})
$$
\frac{\gamma}{2} K_{12}^2 + \frac{\alpha}{2}  A_{12}^2\,.
$$
In their calculation, both $\alpha$ and $\gamma$ assume positive values. The Kraichnan model therefore maps directly to the product studied by \cite{CLTT}, \S 6.2, and hence to ours. By the calculation of the previous paragraph, we deduce
$$
E_{\nu,0} := \frac{\gamma}{2} \,\mu \left ( \text{\rm i} \sqrt{\frac{\alpha}{\gamma}}, \,2 \nu \right )\,.
$$
We remark that, in the incompressible case $\beta = 0$, Formula (\ref{anisotropyDegree}) for the degree of anisotropy of the Kraichnan model yields
$$
\kappa = \frac{\alpha}{\gamma}\,.
$$
Hence the case of ``maximal anisotropy'' lies on the circle of convergence of the expansion presented in \S \ref{kexpansionSubsection}.

\section{Three dimensions}
\label{continuumLimit3dSection} 
The infinitesimal generators associated with the subgroup $A$ take the form
$$
A_{12} = - \ell \,( \sin \theta )^2 \cos (2 \varphi) + \sin (2 \varphi) \frac{\partial}{\partial \varphi} - \frac{1}{2} \sin (2 \theta) \cos (2 \varphi)  \frac{\partial}{\partial \theta}
$$
$$
A_{13} = \ell \left [ \left ( \sin \theta \right )^2  \left ( \sin \varphi \right )^2 - \left ( \cos \theta \right )^2 \right ] + \frac{1}{2} \sin (2 \varphi) \frac{\partial}{\partial \varphi} + \frac{1+ \left ( \sin \varphi \right )^2}{2} \sin ( 2 \theta) \frac{\partial}{\partial \theta}
$$
and
$$
A_{23} = 
\ell \left [ \left ( \sin \theta \right )^2  \left ( \cos \varphi \right )^2 - \left ( \cos \theta \right )^2 \right ] - \frac{1}{2} \sin (2 \varphi) \frac{\partial}{\partial \varphi} + \frac{1+ \left ( \cos \varphi \right )^2}{2} \sin ( 2 \theta) \frac{\partial}{\partial \theta}
$$
where, for simplicity, we write $\varphi$ instead of $\theta_2$, and $\theta$ instead of $\theta_3$.
The Casimir operator is 
$$
\Delta_K := \sum_{i<j} K_{ij}^2 =  \frac{1}{\sin^2 \theta} \frac{\partial^2}{\partial \varphi^2} + \frac{1}{\sin \theta} \frac{\partial}{\partial \theta} \sin \theta \frac{\partial}{\partial \theta}\,.
$$
We recognise the angular part of the familiar Laplacian in three-dimensional space.
The spectral problem for this operator is solved in \cite{DLMF}, \S 14.30: the eigenvalues are the numbers
$$
- l (l+1)\,,\;\;l \in {\mathbb N}\,.
$$
The eigenspace corresponding to $-l (l+1)$ has dimension $2 l+1$ and is spanned by
the (unnormalised) spherical harmonics 
$$
Y_{l}^{m} (\theta,\varphi) := \cos(m \varphi) P_l^{m} (\cos \theta)\;\;\text{or} \;\;\sin(m \varphi) P_l^{m} (\cos \theta)\,,\;\; 0 \le m \le l\,.
$$
Here,
$$
P_l^{m} (x) := \left ( 1-x^2 \right )^{m/2} \frac{d^m P_l}{d x^m} (x)
$$ 
denotes the associated Legendre function,
where $P_l$ is the Legendre polynomial of degree $l$, normalised by $P_l (1)=1$.
The $Y_{l}^{m}$ form an orthogonal set for the inner product
\begin{equation}
\braket{f|v} := \frac{1}{4 \pi} \int_{-\pi}^\pi d \varphi \int_0^\pi d \theta \sin \theta \,\overline{f(\theta,\varphi)}
v (\theta,\varphi)\,.
\label{sphereInnerProduct}
\end{equation}
Furthermore, every square integrable function on the unit sphere--- and hence the elements of $V_\ell$--- may be expanded in terms of these spherical harmonics. 

Set
\begin{equation}
e_{lm} (\theta,\varphi) := P_{2 l}^{2m} (\cos \theta) \cos(2 m \varphi)\,.
\label{basisFunctionsForDis3}
\end{equation}
From the explicit form of the $A_{ij}$, we immediately deduce the
\begin{lemma}
The subspace 
$$
\text{\rm span} \left \{ e_{lm} :\;\; l \in {\mathbb N}\,,\;\;0 \le m \le l \right \}
$$
of $V_\ell$
is invariant under the action of the $A_{ij}$, $1 \le i < j \le 3$.
\label{invarianceLemma}
\end{lemma}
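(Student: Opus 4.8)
The plan is to recognise $\text{span}\{e_{lm}\}$ as the joint invariant subspace of three commuting reflections, and then to show that each $A_{ij}$ commutes with all of them. In the polyspherical coordinates $(\theta,\varphi)=(\theta_3,\theta_2)$, the three coordinate sign reversals $\xi_i\mapsto-\xi_i$ correspond to the involutions $R_1:\varphi\mapsto-\varphi$, $R_2:\varphi\mapsto\pi-\varphi$ and $R_3:\theta\mapsto\pi-\theta$, acting on functions by composition. First I would check, using that $\cos(2m\varphi)$ is fixed by both $R_1$ and $R_2$, together with the reflection rule $P_L^M(-x)=(-1)^{L+M}P_L^M(x)$, that every $e_{lm}$ is fixed by $R_1,R_2,R_3$. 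Conversely, expanding an arbitrary element of $V_\ell$ (viewed on $S^2$) in the spherical harmonics $\cos(M\varphi)P_L^M(\cos\theta)$ and $\sin(M\varphi)P_L^M(\cos\theta)$, invariance under $R_1$ forces the cosine type, invariance under $R_2$ forces $M$ even, and invariance under $R_3$ then forces $L$ even; hence the common fixed-point space is \emph{exactly} $\text{span}\{e_{lm}\}$, with $M=2m$, $L=2l$ and $0\le m\le l$.

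It then remains to show that each $A_{ij}$ is invariant under conjugation by each $R_i$, which I would read off directly from the explicit formulas. Conjugation by $R_1$ or $R_2$ sends $\partial_\varphi\mapsto-\partial_\varphi$ and fixes $\partial_\theta$, while conjugation by $R_3$ sends $\partial_\theta\mapsto-\partial_\theta$ and fixes $\partial_\varphi$; each coefficient function transforms by the corresponding substitution. One then verifies, term by term, that in every coefficient the parity of the multiplier matches the parity of the accompanying derivative: for instance in $A_{12}$ the pair $\sin(2\varphi)\,\partial_\varphi$ is even under $R_1$ and $R_2$ (two sign changes), $\sin(2\theta)\,\partial_\theta$ is even under $R_3$, and the multiplication operator $\sin^2\theta\cos(2\varphi)$ is even under all three; the same pattern holds for $A_{13}$ and $A_{23}$. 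Consequently $R_i A_{ij} R_i^{-1}=A_{ij}$ for all $i<j$. Since the $R_i$ commute with one another, $A_{ij}$ then commutes with the projector $\prod_i\tfrac{1}{2}(1+R_i)$ onto the joint fixed-point space and therefore maps that space into itself, which is the assertion of the lemma.

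The conceptual reason behind the commutation — useful as a sanity check — is that the one-parameter groups $a_{ij}(t)=e^{tA_{ij}}$ consist of diagonal matrices, which commute with the diagonal sign-flip matrices realising $R_1,R_2,R_3$; differentiating the corresponding commutation of pullback operators in $t$ at $t=0$ gives $A_{ij}R_i=R_iA_{ij}$. I expect the only genuine subtlety to be the identification of the span with the reflection-invariant subspace, rather than the commutation itself. The naive route — tracking how $A_{ij}$ moves the $\varphi$-frequency and the Legendre order of a single $e_{lm}$ — is awkward, because the $\varphi$-part shifts $\cos(2m\varphi)$ to $\cos(2(m\pm1)\varphi)$ while the $\theta$-part does not visibly convert $P_{2l}^{2m}$ into functions of order $2(m\pm1)$, so membership in the span is not apparent term by term and would require nontrivial order-recurrences for the associated Legendre functions. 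The symmetry argument circumvents this entirely; the one point to treat with care is that the $R_i$ have determinant $-1$ and so lie in $\text{O}(3)$ rather than $\text{SL}(3,{\mathbb R})$, which is why I verify the commutation directly at the level of the differential operators rather than only invoking the representation.
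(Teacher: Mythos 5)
Your proof is correct, and it takes a genuinely different route from the paper's. The paper disposes of the lemma in one line: invariance is declared immediate from the explicit Iwasawa forms of the $A_{ij}$, and the substantive back-up it supplies is computational---the formulas of Appendix \ref{3dAppendix}, obtained with {\tt Maple} and confirmed by testing, which exhibit each $A_{ij}\,e_{lm}$ as an explicit finite combination of the $e_{l'm'}$ with $|l'-l|\le 1$, $|m'-m|\le 1$. That is precisely the ``naive route'' of tracking $\varphi$-frequencies and Legendre orders that you set aside. Your reflection argument---identifying $\mathrm{span}\,\{e_{lm}\}$ with the joint fixed space of the three commuting coordinate reflections and checking $R_i A_{jk} R_i^{-1}=A_{jk}$ term by term (or, more structurally, from the commutation of diagonal matrices with diagonal sign-flips)---is computation-free, involves no Legendre recurrences, and in fact delivers a genuine proof where the paper's appendix formulas are conjectured-and-tested rather than derived. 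What it does not buy is what the paper needs immediately afterwards: the explicit matrix elements $a_{pq}(i,j)$ of ${\mathscr A}=\sum_{i<j}A_{ij}^2$ in the basis $\{e_{lm}\}$, which drive the small-$k^2$ expansion of \S\ref{continuumLimit3dSection}; your argument shows such an expansion exists but produces no coefficients. The two approaches are therefore complementary: yours proves the lemma cleanly, while the paper's computation establishes a stronger quantitative statement that the perturbation theory then consumes.

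One point to tighten: your converse step expands a general element of $V_\ell$ in spherical harmonics, which is an infinite series, so what it identifies with the joint fixed space is the \emph{closure} of $\mathrm{span}\,\{e_{lm}\}$, whereas the span in the lemma is the algebraic one. The gap closes in one sentence: each $A_{ij}$ is a first-order operator with trigonometric-polynomial coefficients, so $A_{ij}\,e_{lm}$ is the restriction to $S^{2}$ of a polynomial of bounded degree and hence has a \emph{finite} spherical-harmonic expansion; your parity constraints then force every term of that finite expansion to be some $e_{l'm'}$, placing $A_{ij}\,e_{lm}$ in the algebraic span itself.
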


The action of the operator 
\begin{equation}
{\mathscr A} := \sum_{i<j} A_{ij}^2
\label{Aoperator}
\end{equation}
on the basis functions may be deduced from the formulae in Appendix \ref{3dAppendix}:
\begin{equation}
{\mathscr A} e_{ij} = \sum_{p=-2}^2 \sum_{q=-2}^2 a_{pq} (i,j) \,e_{p+i,q+j}
\label{actionOfA}
\end{equation}
where the variable coefficients $a_{pq}(i,j)$ may be computed explicitly for any specific values of $p,q,i,j$.  An inspection of these coefficients reveals that the subspaces
$$
\text{span} \left \{ e_{lm} : l,m \in {\mathbb N}\,, \text{$m$ even}, 0 \le m \le l \right \}
$$
and
$$
\text{span} \left \{ e_{lm} : l,m \in {\mathbb N}\,, \text{$m$ odd}, 0 \le m \le l \right \}
$$
are invariant under the action of ${\mathscr A}$. Since $1_\ell = e_{00}$ belongs to the former subspace,
we shall look for a solution of Equation (\ref{perturbedSpectralProblem}) of the form
$$
v(\theta,\varphi) = \sum_{i=0}^\infty \sum_{j=0}^{\left [ i/2 \right ]} v_{ij} \,\varepsilon_{ij}(\theta,\varphi)
$$
where
\begin{equation}
\varepsilon_{ij}  := e_{i,2j} = P_{2i}^{4j} ( \cos \theta) \cos(4j \varphi)\,.
\label{3dBasisFunctions}
\end{equation}
Equation (\ref{actionOfA}) then produces
$$
{\mathscr A} \varepsilon_{ij} = \sum_{p=-2}^2 \sum_{q=-1}^1 a_{p,2q} (i,2j) \,\varepsilon_{p+i,q+j}\,.
$$
After substituting and equating the coefficients, we obtain the following difference equation for the unknown $v_{ij}$:
\begin{equation}
\mu \,v_{ij} = - 2i (2 i+1) \,v_{ij} - k^2 \left ( {\mathscr A}v \right )_{ij}\,,\;\;i,\,j \in {\mathbb N}\,,\;\;0 \le j \le \left [ i/2 \right ]\,,
\label{3dDifferenceEquation}
\end{equation}
where $\left ( {\mathscr A}v \right )_{ij}$ is defined implicitly by
$$
{\mathscr A} v = \sum_{i=0}^\infty \sum_{j=0}^{\left [ i/2 \right ]} \left ( {\mathscr A}v \right )_{ij} \varepsilon_{ij}\,.
$$

\subsection{The small-$k^2$ limit}
When $k=0$, the eigenvalues are given by $\mu = -2l (2l+1)$, $l \in {\mathbb N}$; the corresponding eigenspace is spanned by the $\varepsilon_{lm}$, $0 \le m \le \left [ l/2 \right ]$, so that the algebraic multiplicity
is $\left [l/2\right ]+1$.
As explained in the introduction, the eigenvalue of interest is the one that maximises the right-hand side of Equation (\ref{generalisedLyapunovExpansion}); for small $k^2$, the relevant branch therefore corresponds to $l=0$, the next most relevant branch corresponds to $l=1$, and so on.
For $l \in \{0,1\}$, the eigenspace is one-dimensional, so we can proceed as in the case $d=2$, except that it will be more convenient to normalise the eigenfunction corresponding to $l$ by imposing the condition
\begin{equation}
v_{l0} = 1\,.
\label{3dnormalisationCondition}
\end{equation}
We look for a solution of Equation (\ref{3dDifferenceEquation}) of the form
\begin{equation}
v_{ij} = \sum_{n=0}^\infty v_{ij}^{(n)} k^{2n}\,,\;\;\mu = \sum_{n=0}^\infty \mu^{(n)} k^{2n}\,.
\label{integralPowerSeries}
\end{equation}
Obviously,
$$
\mu^{(0)} = -2l (2l+1)\;\;\text{and}\;\;v_{ij}^{(0)} = \delta_{il} \delta_{j0}\,.
$$
Equations (\ref{3dDifferenceEquation}-\ref{integralPowerSeries}) then imply, for $n=0,\,1,\,\ldots$,
\begin{multline}
\label{3drecurrenceRelation}
\left [2l(2l+1) - 2i (2i+1) \right ] v_{ij}^{(n+1)} \\
= \mu^{(n+1)} v_{ij}^{(0)} + \left ( {\mathscr A} v^{(n)} \right )_{ij} + \sum_{r=1}^n \mu^{(r)} v_{ij}^{(n+1-r)}\,.
\end{multline}
Now, the normalisation condition (\ref{3dnormalisationCondition}) forces
\begin{equation}
\notag
v_{l0}^{(n+1)} = 0\,.
\end{equation}
Therefore, by setting  $i=l$, we deduce from Equation (\ref{3drecurrenceRelation}) that
\begin{equation}
\notag
\mu^{(n+1)} = - \left ( {\mathscr A} v^{(n)} \right )_{l0}\,.
\end{equation}
The $v_{ij}^{(n+1)}$, $i \ne l$, are then easily computed.

For $l=0$, this algorithm yields 
\begin{multline}
\mu = \mu_0 := -\frac{4 \ell \left( \ell +3\right)}{5} k^2 + \frac{12 \ell \left(\ell -2\right)  \left(\ell +3\right) \left(\ell +5\right)}{875} k^4 \\
-\frac{72 \left(\ell -2\right) \ell \left(\ell +3\right) \left(\ell +5\right)  \left(9 \ell^{2}+27 \ell -395\right)}{3128125} k^6 + \cdots
\label{3dexpansionForMU0}
\end{multline}
leading to the formula (\ref{expansionForCGF3}).  By collecting the terms in powers of $\ell$, we deduce the following expansions for the cumulants:
\begin{multline}
\gamma_1 = 2-\frac{12}{5} k^{2}-\frac{72}{175} k^{4}-\frac{34128}{125125} k^{6} \\
-\frac{17244576}{74449375} k^{8} -\frac{223736256}{1010384375} k^{10} + \cdots
\label{firstCumulantExpansion3}
\end{multline}
\begin{multline}
\frac{\gamma_2}{2} = \frac{2}{3}-\frac{4}{5} k^{2}-\frac{12}{875} k^{4}+\frac{5976}{625625} k^{6} \\
+\frac{59358528}{2605728125} k^{8}+\frac{22326336}{642971875} k^{10} + \cdots
\label{secondCumulantExpansion3}
\end{multline}
\begin{equation}
\frac{\gamma_3}{6} = \frac{72}{875} k^{4}+\frac{27432}{446875} k^{6}+\frac{30362256}{521145625} k^{8}+\frac{15082522656}{247544171875} k^{10} + \cdots
\label{thirdCumulantExpansion3}
\end{equation}
\begin{equation}
\frac{\gamma_4}{24} = \frac{12}{875} k^{4}+\frac{1584}{284375} k^{6}+\frac{32383188}{13028640625} k^{8}+\frac{60775592}{1237720859375} k^{10} + \cdots
\label{fourthCumulantExpansion3}
\end{equation}
and so on. Our calculations suggest that these expansions converge for $k^2 < R$, where $R$ is approximately $0.7$. 

For $l=1$, we obtain
\begin{multline}
\mu = \mu_1 := -6 -\frac{8 \left(\ell^2 +3 \ell-3\right)}{7} k^2 + \frac{96 \left(\ell -2\right) \left(\ell +5\right) \left(2 \ell +3\right)^{2}}{41503} k^4 \\
+\frac{128  \left(\ell -2\right) \left(\ell +5\right) \left(2 \ell +3\right)^{2} \left(8 \ell^{2}+24 \ell +361\right)}{26437411} k^6 + \cdots
\label{3dexpansionForMU1}
\end{multline}

For $l>1$, a new situation arises because there are--- not just one--- but  rather $[l/2]+1$ associated eigenfunction branches. As explained by Vishik \& Lyusternik \cite{VL}, the expansion of any of those branches may involve {\em fractional} powers of $k^2$. The problem of computing the coefficients of such expansions is interesting but,  for the sake of brevity, and because we have all that is needed for our immediate purpose, we shall not pursue it.

\subsection{The replica trick}
\label{quasiSolvableSubsection3d}
We find by direct calculation that, when $\ell$ is an even natural number, the operator $\Delta_K -k^2 {\mathscr A}$ admits the invariant finite-dimensional subspace
$$
\bigoplus_{l=0}^{\ell/2} \text{span} \left \{ \varepsilon_{lm} : \;0 \le m \le [l/2] \right \}\,.
$$

For $\ell=2$, the characteristic polynomial has the roots $\mu(k,2) = \mu_0 = - 8 k^2$ and $\mu_1 = -6 -8 k^2$.

For $\ell=4$, the characteristic polynomial factorises into the product of two quadratic polynomials; the leading root is
\begin{equation}
\notag
\mu(k,4) = \mu_0 = -20 k^{2}-10+2 \sqrt{36 k^{4}-12 k^{2}+25}\,.
\end{equation}
The reader will easily verify that the small-$k^2$ expansion is indeed obtained by setting $\ell=4$ in Formula (\ref{3dexpansionForMU0}); it converges for $k^2 \le 5/6$.
The next root of interest is
\begin{equation}
\notag
\mu_1 = -20 k^{2}-13+\sqrt{144 k^{4}-120 k^{2}+49}\,.
\end{equation}
Its small-$k^2$ expansion agrees with Formula (\ref{3dexpansionForMU1}); it converges for $k^2 \le 7/12$.
 
As in the two-dimensional case, we conjecture that the radius of convergence of the small-$k^2$ expansions decreases with $\ell$ and eventually dips below $1/3$.

\section{A renewing flow with intervals of pure strain}
\label{pureStrainSection}

In \S \ref{outlineSubsection}, we introduced the artificial parameter $k^2$ in order to take advantage of Proposition \ref{casimirProposition}, which relates the transfer operator to the Casimir of $\text{\rm SO}(d,{\mathbb R})$ when $k^2=1/d$.
Our purpose in the present section is to show that, for {\em every} $0 \le k^2 \le 1/d$, $\ell \mapsto \Lambda(k,\ell)$ may itself be interpreted as the generalised Lyapunov exponent of (the continuum limit of) a product of random matrices
associated with a renewing flow.

For simplicity, consider the case $d=2$. To define the renewing flow, it suffices to specify the velocity field in each of the subintervals $t_n < t < t_{n+1}$, where, for $d=2$, $t_n = 3 n \delta t$, $n \in {\mathbb N}$:
\begin{align*}
& \text{For $t_n < t < t_n + \delta t$} \,, \;\;  &\dot{x}_1(t) &= 0 ,\;\;  & \dot{x}_2 (t) &= u_{21} \left ( x_1 (t) + \eta_{21}^n \right )\,. \\
& \text{For $t_n + \delta t < t < t_n + 2 \delta t$}\,, \;\;   &\dot{x}_1(t) &= \alpha_{12}^n \,x_1 (t), & \dot{x}_2 (t) &= -\alpha_{12}^n \,x_2 (t)\,. \\
& \text{For $t_n + 2 \delta t < t < t_n + 3 \delta t$}\,,\;\; &\dot{x}_1(t) &= u_{12} \left ( x_2 (t) + \eta_{12}^n \right ), & \dot{x}_2 (t) &= 0\,.
\end{align*}
The novel element is the insertion of an interval consisting, in fluid-mechanical language, of a pure strain in the $x_1-x_2$ plane.
Here, the $\alpha_{12}^n$ denote independent draws from the distribution of a random variable with zero mean, independent of the $\eta_{ij}^n$. 

The Jacobians of the successive transformations $x(t_n) \mapsto x(t_n+\delta t)$, $x(t_n+\delta t) \mapsto x(t_n+2\delta t)$ and $x(t_n+2\delta t) \mapsto x(t_n+3\delta t)$ are given by the $2 \times 2$ matrices
$$
\begin{pmatrix}
1 & 0 \\
\delta t u_{21}'(x_1^n + \eta_{21}^n) & 1
\end{pmatrix},\;\;\begin{pmatrix}
e^{\delta t \alpha_{12}^n} & 0 \\
0 & e^{-\delta t \alpha_{12}^n}
\end{pmatrix}\;\;\text{and}\;\;\begin{pmatrix} 1 & \delta t u_{12}' ( x_2^{n+1} + \eta_{12}^n ) \\
0 & 1
\end{pmatrix}
$$
respectively, where $x^n := x (t_{n})$. Arguing as in \S \ref{flowToProductSubsection}, we are led to a product of the matrices
\begin{equation}
\notag
g_n := \begin{pmatrix} 1 & \delta t \,u_{12}' \left ( \phi_{12}^n  \right ) \\
0 & 1
\end{pmatrix}
\begin{pmatrix}
e^{\delta t \alpha_{12}^n} & 0 \\
0 & e^{-\delta t \alpha_{12}^n}
\end{pmatrix}
\begin{pmatrix}
1 & 0 \\
\delta t \,u_{21}' \left (\phi_{21}^n \right ) & 1
\end{pmatrix}
\end{equation}
where
$$
\phi_{12} := \eta_{12} + \delta t\, u_{21} \left ( \eta_{21} \right )\;\;\text{and}\;\;\phi_{21} = \eta_{21}\,.
$$

This construction generalises to any dimension, leading to products consisting of matrices of the form
\begin{multline}
g_n = \prod_{j \ne 1} e^{\delta t \xi_{1j}^n N_{1j}} \prod_{1<j} e^{\delta t \alpha_{1j}^n A_{1j}} \\
\cdots \prod_{j \ne d-1} e^{\delta t \xi_{d-1,j}^n N_{d-1,j}} \prod_{d-1<j} e^{\delta t \alpha_{d-1,j}^n A_{d-1,j}} \prod_{j \ne d} e^{\delta t \xi_{dj}^n N_{dj}}
\label{matrixWithStrain}
\end{multline}
where the $\xi_{ij}^n$ are given by the formulae (\ref{tij}-\ref{upperPhiAngles}) and the $\alpha_{ij}^n$ are mutually independent, zero-mean random variables independent of the $\xi_{ij}^n$.
In the continuum limit, the transfer operator associated with the product becomes
$$
1 + \frac{(\delta t)^2}{2} \left [ \sum_{i \ne j} {\mathbb E} \left ( \xi_{ij}^2 \right ) N_{ij}^2 + \sum_{i<j} {\mathbb E} \left ( \alpha_{ij}^2 \right ) A_{ij}^2 \right ]\,.
$$
If we now assume that
$$
{\mathbb E} \left ( \xi_{ij}^2 \right ) = \sigma^2\;\;\text{and}\;\; {\mathbb E} \left ( \alpha_{ij}^2 \right ) = \sigma^2  \left ( \frac{1}{d} - k^2 \right )
$$
for some positive $\sigma$ and some dimensionless number $0 \le k^2 \le 1/d$, independent of $i$ and $j$, then the corresponding generalised Lyapunov exponent is precisely $L(k,\ell)$.

With this model, the dimensionless number $1/d-k^2$  can be interpreted as a measure of the disorder strength in the pure-strain coefficients $\alpha_{ij}$, relative to that in the velocity gradients $u_{ij}'$. Plots of the first few cumulants
against $1/d-k^2$ are shown in Figure \ref{cumulantFigure}. We observe in particular that, regardless of the dimension, $\gamma_1$ and $\gamma_2$ increase with the relative strength in the pure-strain disorder, whereas $\gamma_3$ and $\gamma_4$ decrease.
Also noteworthy is the fact that, for $d=2$,
$$
\gamma_1-\gamma_2 = O \left ( k^4 \right ) \;\;\text{and}\;\;\gamma_3 - \gamma_4 = O \left ( k^8 \right ) \;\;\text{as $k \rightarrow 0$}\,.
$$
 
\begin{figure}[htbp]
  \centering
  \includegraphics[width=0.45\linewidth]{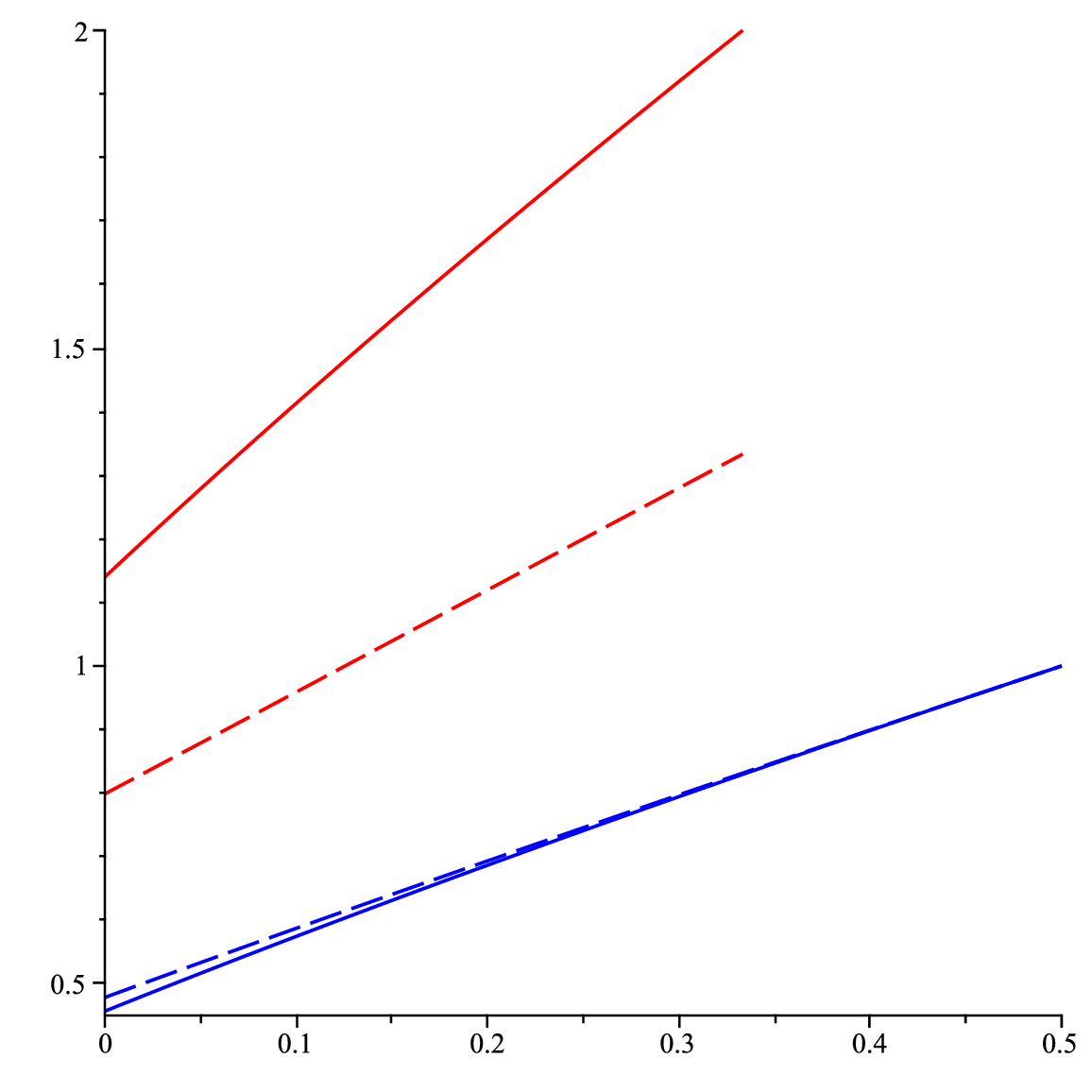}%
  \hspace{0.05\linewidth}%
  \includegraphics[width=0.45\linewidth]{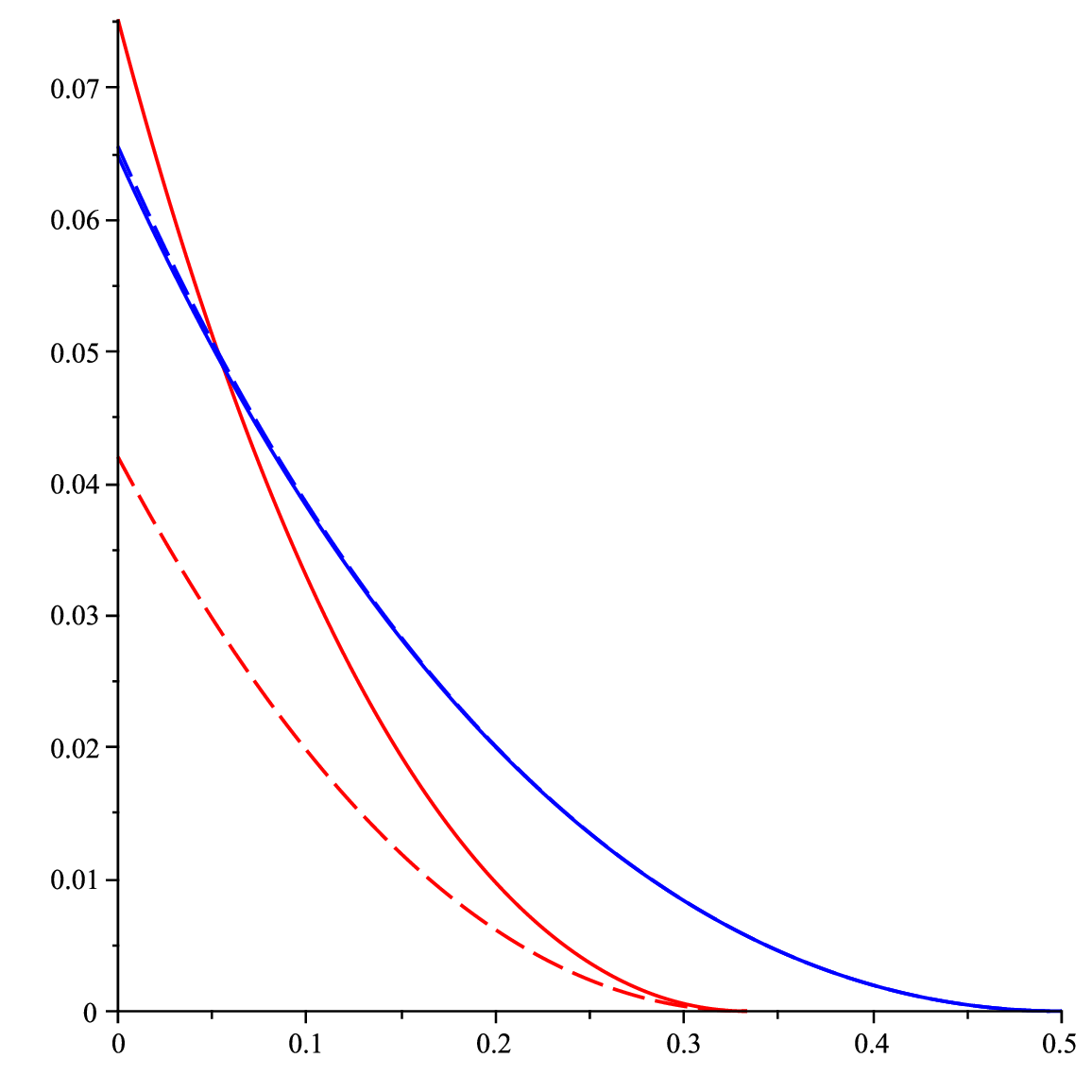}
    \begin{picture}(0,0) 
\put(-270,-7){(a)}
\put(-80,-7){(b)} 
\end{picture} 
  \caption{Plots of the cumulants $\gamma_j$ against $1/d-k^2$ for $1 \le j \le 4$. 
  The blue and red curves correspond to $d=2$ and $d=3$ respectively: 
  (a) $\gamma_1$ (solid curves) and $\gamma_2$ (dashed curves); 
  (b) $\gamma_3$ (solid curves) and $\gamma_4$ (dashed curves).}
  \label{cumulantFigure}
\end{figure}

The fact that, in the case $d=2$, the first two cumulants nearly coincide is reminiscent of the ``single parameter scaling'' property exhibited by other models; see \cite{Te}.


\section{Discussion and concluding remarks}
\label{concludingSection}

We have considered a product of matrices related to a class of $d$-dimensional incompressible renewing flows and studied its continuum limit with, as a practical objective, 
the calculation of the associated generalised Lyapunov exponent $L(\ell)$.

For the case $d=2$, the analysis involves a spectral problem that can be mapped, by allowing a certain parameter ratio to assume imaginary values, to that studied by Chetrite {\em et al.} \cite{CDG} for the Kraichnan ensemble of velocity fields. We computed, as they did, the first two cumulants 
in terms of elliptic functions. We also showed how these calculations relate, through the work of \cite{CLTT}, to the one-dimensional Anderson model at zero energy \cite{KW,DG,ST}, and to the one-dimensional Dirac equation with a random mass \cite{RT}.

For $d \ge 2$, we embedded the problem in a $k$-dependent family such that, when $k=0$, the problem is
solvable. Treating $k$ as a perturbation parameter, we then obtained expansions in ascending powers of $k^2$ for a function $\Lambda(k,\ell)$ that, suitably rescaled, coincides in the continuum limit with $L(\ell)$ when $k^2=1/d$.  The first few terms of this expansion were computed explicitly for $d \in \{2,3\}$. We observed  that the radius of convergence relates to the spectral gap separating the dominant eigenvalue of the transfer operator from the rest of its spectrum, and that it decreases with $\ell$ so that, when $k^2=1/d$, the expansions only converge for moderate values of $\ell$.  Nevertheless, they provided an effective means of computing the  cumulants, which depend only on the local behaviour of the generalised Lyapunov exponent at $\ell=0$.

We also showed that, for every $0 \le k^2 \le 1/d$, $\ell \mapsto \Lambda(k,\ell)$ is--- still in the continuum limit and suitably scaled--- the generalised Lyapunov exponent of a product of matrices associated with the renewing flow obtained by inserting time intervals of pure strain; the number $1/d-k^2$ then measures the strength of the disorder in the pure strain, relative to that in the pure-strain-free version. 
Whether there is, for the case $d >2$, a simple mapping between this model and the Kraichnan model studied by Chetrite {\em et al.} is an interesting open question.

\subsection{Comparison with the calculations of Haynes and Vanneste}
\label{haynesVannesteSection}
As mentioned in the introduction, Haynes \& Vanneste's objective is to determine conditions under which the decay rate of the covariance $\Gamma (0,t)$ of a passive scalar, evolving in a spatially periodic domain, may be deduced from the stretching characteristics of the random flow.  The main conclusion of the analysis presented in their section III, devoted to the continuum limit, is that, as long as the period of the spatial domain is a small enough integer multiple of that of the velocity field, then
$$
-\lim_{t \rightarrow \infty} \frac{\ln \Gamma ({\mathbf x},t)}{t} = \Lambda(0) - \frac{2 \pi^2 \Lambda''(0)}{\ln^2 \kappa} + o \left ( \frac{1}{\ln^2 \kappa} \right ) \;\;\text{as $\kappa \rightarrow 0+$}
$$
where $\Lambda(0)$ and $\Lambda''(0)$ are expressible in terms of what we denote in the present paper $\Lambda(k,\ell)$.
This is their Equation (3.8), reproduced here using their notation, which uses $\kappa$ instead of $D$ for the diffusivity constant. 
In the statement just above their equation (3.21), they also give the numerical values $\Lambda(0) \approx 0.226$ and $\Lambda''(0) \approx -0.45$ but,
when one compares their equations (3.8) and (3.21), it becomes clear that $\Lambda''(0) \approx -0.45$ was intended. 

The precise correspondence between their function $\Lambda$ and our own $\Lambda(k,\ell)$ is clouded by some unfortunate clashes of notation. To clarify it, let us take as starting point Haynes \& Vanneste's Equation (3.19), obtained after separating the spatial variable ${\mathbf x} = (x,y)$ from the temporal variable $t$:
$$
- \left ( y^2 \frac{\partial^2}{\partial x^2} + x^2 \frac{\partial^2}{\partial y^2} \right ) \Gamma = \lambda \Gamma\,.
$$
We recognise on the left-hand side the operator which we would denote  $-\left ( N_{21}^2+N_{12}^2 \right )$ in the basic realisation. Haynes and Vanneste look for a solution of the form (see their Equation (3.4)) 
$$
\Gamma ({\mathbf x}) = r^{\sigma-1} f_\sigma (\theta)\,.
$$
This is obviously equivalent to working with the Iwasawa realisation of the representation. Their equation (3.20) thus corresponds to our equation (\ref{perturbedSpectralProblemForDEqual2}). 
Let us agree to write
$$
A \sim B
$$
to signify that the mathematical object which Haynes and Vanneste denote by the symbol $A$ on the left-hand side is denoted in our paper by the symbol $B$ on the right-hand side. We then obtain the following ``dictionary'':
\begin{multline}
\notag
{\mathbf x} = (x,y) \sim x = (x_1,x_2)\,,\;\;\kappa \sim D\,,\;\;\sigma \sim \ell+1\,,\\
f_\sigma (\theta) \sim v(\theta)\,,\;\;
\Lambda (\sigma) \sim -\Lambda \left (k,\ell \right ) \Bigl |_{k=\frac{1}{\sqrt{2}}}\,.
\end{multline}

In particular, we see that, in the work of Haynes and Vanneste, it is the
value taken by  $\Lambda(1/\sqrt{d},\ell)$ at its minimum
\begin{equation}
\ell = \ell_{\min} := -\frac{d}{2}\,,
\label{Lmin}
\end{equation}
as well as the value of the second derivative there,
that govern the temporal decay of the covariance.

It turns out that $\ell_{\min}$ is small enough to ensure that the partial sums of the series for $\Lambda(k,\ell)$ in powers of $k^2$ converge nicely at $k^2=1/d$ when $\ell = \ell_{\min}$.
In particular, for $d=2$, we estimate--- on the basis of the first 64 terms computed with the help of the mathematical software {\tt Maple}--- that, for $\ell=\ell_{\min}=-1$, the radius of convergence is approximately $1$.
This expansion translates into
\begin{multline}
\Lambda \left ( k,\ell_{\min} \right ) = -\frac{1}{2}+\frac{1}{2} k^{2}+\frac{9}{128} k^{4} +
\frac{9}{256} k^{6}+\frac{47241}{2097152} k^{8}  \\
+\frac{67995}{4194304} k^{10} 
+\frac{13407669}{1073741824} k^{12}+\frac{21598857}{2147483648} k^{14}+\cdots
\label{expansionForLAtTheMinimum}
\end{multline}
and
\begin{multline}
\frac{\partial^2 \Lambda}{\partial \ell^2} \left ( k,\ell_{\min} \right ) = 1-k^{2}-\frac{5}{32} k^{4}-\frac{5}{64} k^{6}-\frac{13109}{262144} k^{8} \\ -\frac{18847}{524288} k^{10} 
-\frac{7424679}{268435456} k^{12}-\frac{11948355}{536870912} k^{14}+ \cdots
\label{expansionForLambdaDoublePrimeAtTheMinimum}
\end{multline}
For $k^2=1/2$, the partial sums converge well and, by comparing them, we deduce the following values, believed to be correct to the number of decimals displayed:
\begin{equation}
\notag
\Lambda \left (k, \ell_{\min} \right )  \Bigl |_{k= \frac{1}{\sqrt{2}}}= -0.2257817708 \ldots
\;\;\text{and}\;\;
\frac{\partial^2 \Lambda}{ \partial \ell^2} \left (k, \ell_{\min} \right ) \Bigl |_{k= \frac{1}{\sqrt{2}}} =  0.4461902388 \ldots
\end{equation}


\subsection{The spectrum of the transfer operator and its dependence on $\tau$, $k$ and $\ell$}
Equation (\ref{symmetricContinuumLimit}) and Proposition \ref{casimirProposition} imply that the eigenvalues of the transfer operator, in the continuum limit, are of the form
$$
\lambda = \lambda_l := 1 - \tau^2 l (l+d-2)\,,\;\;\text{for some}\;l \in \{0,1, 2,\ldots\},
$$
to leading order as $\ell$ and $k$ both tend to zero. In particular, the leading eigenvalue appearing in Equation (\ref{spectralCharacterisation}) is indeed strictly positive in this limiting case. What becomes of the spectrum as the parameters $k$, $\ell$ and $\tau$ 
vary is an open question; a systematic numerical study of the convergence properties of the expansions in powers of $k$ obtained in this paper could provide a useful starting point for investigating it.

\subsection{Other Lyapunov exponents}
We have motivated the introduction of the generalised Lyapunov exponent $L(\ell)$ as a means of quantifying how quickly nearby trajectories of a $d$-dimensional dynamical system separate. 
For $d>2$, the temporal growth of $m$-dimensional volumes, $2 \le m < d$, is also of obvious interest; see \cite{NV}. Tutubalin \cite{Tu} showed how to generalise the representation $T_\ell$ in order to define
a suitable transfer operator. However, it is not at present clear to us whether there exists, for these more complicated representations, a counterpart of Proposition \ref{casimirProposition}, that could relate the transfer operator to the Casimir of $\text{SO}(d,{\mathbb R})$
in a useful way.

\subsection{Beyond the continuum limit}
\label{beyondSubsection}

We end the paper by discussing, briefly and for a very concrete two-dimensional example, some of the issues that one encounters when attempting to look beyond $o(\tau^2)$. 

The first observation is that the relationship between the transfer operator and the angular Laplacian is lost, so that no particular advantage is gained by working with the Iwasawa realisation of the basic representation. Rather, because
the matrices in the product are themselves products of triangular matrices, it is more natural to work with another realisation, based on the {Gauss decomposition} of the group
$\text{SL} (d,{\mathbb R})$ \cite{Vi}. For $d=2$, the resulting realisation takes the form
\begin{equation}
\left [ T_{\ell} (g) v \right ] (z) = \left | a+z c\right |^{\ell} v ( z \cdot g )\;\;\text{where}\;\; z \cdot g := \frac{b+z d}{a+z c}\,.
\label{realisationOnTheLine}
\end{equation}
Here, $z \in {\mathbb R}$ is the counterpart of the angle used as independent variable in the Iwasawa realisation. The corresponding infinitesimal generators are displayed in the last column of Table \ref{2dInfinitesimalGeneratorTable}.

The particular choice
\begin{equation}
u_{12}' (\theta) = u_{21}' (\theta) = \sqrt{2} \,\sigma \cos \theta
\label{pierrehumbertModel}
\end{equation}
served as a toy model in the articles \cite{HV,Va}; it leads 
to  a  product $\Pi_n$ of independent draws from the distribution of
\begin{equation}
g = \begin{pmatrix} a & b \\ c & d \end{pmatrix} :=
\begin{pmatrix} 
1 & \alpha \\
0 & 1
\end{pmatrix} 
\begin{pmatrix}
1 & 0 \\
\beta & 1
\end{pmatrix} 
\label{productElement}
\end{equation}
where, according to Equation (\ref{2dRandomMatrix}), $\alpha = 2 \tau \cos \theta$ and $\beta= 2 \tau \cos \theta'$, with $\theta$ and $\theta'$ drawn independently from the uniform distribution in $(-\pi,\pi)$.

Let us take the calculation of the first cumulant $c_1$ in Equation (\ref{generalisedLyapunovExponent}) as a modest objective. The following result, whose proof we omit, is a simple application of Furstenberg's formula \cite{Fu} for the Lyapunov exponent  of a product of independent identically-distributed random matrices:
\begin{proposition}
Let $\nu (dz)$ be the probability measure of a real random variable $Z$, whose law is invariant under the action of the random matrix $g$ in Equation (\ref{productElement}), i.e.
\begin{equation}
Z \overset{\text{\rm law}}{=} \cfrac{1}{\beta+\cfrac{1}{\alpha + Z}} \,.
\label{continuedFraction}
\end{equation}
Then
$$
{c}_1 = \int_{\mathbb R} \nu (dz) \,{\mathbb E} \left ( \ln \left | \frac{\alpha+z}{z} \right | \right )\,.
$$
In particular, if $\alpha$ and $\beta$ are identically distributed, then
$$
{c}_1 = -2 \int_{\mathbb R} \nu(dz)\,\ln |z|\,. 
$$
\label{furstenbergProposition}
\end{proposition}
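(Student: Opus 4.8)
The plan is to deduce both identities from Furstenberg's integral formula \cite{Fu} for the top Lyapunov exponent of a product of i.i.d.\ matrices. First I would pin down the geometry: in the realisation (\ref{realisationOnTheLine}) the variable $z$ parametrises a direction in the plane (explicitly $z=x_2/x_1$ for a row vector $x=(x_1,x_2)$), and the induced action $z\mapsto z\cdot g=(b+zd)/(a+zc)$ is exactly the M\"obius map attached to the $g$ of (\ref{productElement}). With this identification the stationarity requirement for $\nu$---that $z\cdot g$ be distributed as $z$ when $z\sim\nu$ is independent of $g$---is precisely the continued-fraction invariance (\ref{continuedFraction}). Furstenberg's theorem then gives $\gamma_1=\int_{\mathbb R}\nu(dz)\,\mathbb{E}\bigl[\ln(\|xg\|/\|x\|)\bigr]$ with $x=(1,z)$ and the Euclidean norm; existence and uniqueness of $\nu$ hold once the support of the law of $g$ is seen to generate a non-compact, strongly irreducible subgroup of $\mathrm{SL}(2,\mathbb R)$.

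To reach the stated shape I would use the Gauss factorisation $g=\begin{pmatrix}1&\alpha\\0&1\end{pmatrix}\begin{pmatrix}1&0\\\beta&1\end{pmatrix}$ and follow the second component of $y_n:=x\Pi_n$. Writing $z_{k-1}$ for the direction of $y_{k-1}$, the upper-triangular factor sends $(p,pz_{k-1})\mapsto(p,p(\alpha_k+z_{k-1}))$ while the lower-triangular factor leaves the second component fixed, giving the exact per-step identity $(y_k)_2/(y_{k-1})_2=(\alpha_k+z_{k-1})/z_{k-1}$. Telescoping and dividing by $n$, the left-hand side tends to $\gamma_1$ (the second component grows at the top rate because the stationary direction is a.s.\ not axis-aligned), while the right-hand side tends, by Birkhoff's ergodic theorem for the stationary chain $(z_k)$, to $\int_{\mathbb R}\nu(dz)\,\mathbb{E}\ln\bigl|(\alpha+z)/z\bigr|$, with $\alpha$ independent of $z\sim\nu$ since $\alpha_k$ is independent of the past. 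This is the first assertion. Equivalently, keeping the Euclidean-norm cocycle and factoring $\|xg\|$ through its second coordinate produces, besides $\ln|\alpha+z|$, the term $\tfrac12\ln\frac{1+(z\cdot g)^{-2}}{1+z^{2}}$, whose $\nu$-average collapses (using $z\cdot g\sim\nu$) to $-\int_{\mathbb R}\nu(dz)\ln|z|$.

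For the specialisation I would write $\gamma_1=I_1-I_2$ with $I_1=\int\nu(dz)\,\mathbb{E}\ln|\alpha+z|$ and $I_2=\int\nu(dz)\ln|z|$, and prove $I_1=-I_2$ when $\alpha\overset{\mathrm{law}}{=}\beta$. The key is the reciprocal involution $R:z\mapsto1/z$, which conjugates the translation $z\mapsto z+\alpha$ to the M\"obius map $w\mapsto w/(1+\alpha w)$ and conversely. A single step of the chain is a translation by $\alpha$ followed by the M\"obius map $w\mapsto w/(1+\beta w)$; hence the intermediate variable $w=\alpha+z$ runs a Markov chain whose one-step map is ``M\"obius-by-$\beta$ then translate-by-$\alpha$'', with stationary law $\widetilde\nu$ (the law of $w$), whereas the $R$-image of the original $z$-chain has one-step map ``M\"obius-by-$\alpha$ then translate-by-$\beta$'', with stationary law $\check\nu$ (the law of $1/z$). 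When $\alpha\overset{\mathrm{law}}{=}\beta$ these two chains have identical transition laws, so uniqueness of the stationary measure forces $\widetilde\nu=\check\nu$; that is, $w=\alpha+z$ (with $z\sim\nu$) has the same law as $1/z$ (with $z\sim\nu$). Therefore $I_1=\mathbb{E}\ln|w|=\mathbb{E}\ln|1/z|=-I_2$, and $\gamma_1=-2\int_{\mathbb R}\nu(dz)\,\ln|z|$, as claimed. Note that this uses only equality in law, not symmetry of $\alpha$.

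The algebra above is harmless; the genuine obstacles are analytic. I expect the main difficulty to be the integrability needed to legitimise each limit: one must control the mass $\nu$ places near $z=0$ and $z=\infty$ so that $\int\nu(dz)\,\bigl|\ln|z|\bigr|$ and $\int\nu(dz)\,\mathbb{E}\,\bigl|\ln|\alpha+z|\bigr|$ are finite, and so that $\tfrac1n\ln(\|y_n\|/|(y_n)_2|)\to0$. These follow from the standard regularity theory of the Furstenberg measure for a strongly irreducible, proximal and suitably integrable law on $\mathrm{SL}(2,\mathbb R)$, but checking those hypotheses for the specific $g$ of (\ref{productElement})---in particular H\"older regularity of $\nu$ and finiteness of the relevant logarithmic moments---is where the real work lies.
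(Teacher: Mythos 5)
The paper offers no proof of this proposition at all --- it is stated as ``a simple application of Furstenberg's formula \cite{Fu}, whose proof we omit'' --- and your proposal is a correct, detailed implementation of exactly that approach: the identification of the invariance (\ref{continuedFraction}) with stationarity of $\nu$ under the M\"obius action of $g$, the second-component cocycle (or, equivalently, the coboundary correction to the norm cocycle) giving the first identity, and the reciprocation argument giving the second. The one point you should make explicit is that your derivation of $\alpha+Z \overset{\text{law}}{=} 1/Z$ requires $\alpha$ and $\beta$ to be \emph{independent} as well as identically distributed --- the transition kernels of the $w$-chain and of the reciprocated $z$-chain coincide only if the within-step pair $(\alpha_k,\beta_k)$ has product law, since one kernel uses $(\alpha_k,\beta_k)$ and the other uses $(\beta_k,\alpha_{k+1})$ --- a hypothesis that holds in the paper's setting by Proposition \ref{disorderProposition} but is not stated in the proposition itself.
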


The problem of computing $c_1$ therefore reduces to that of finding the distribution of the random continued fraction
implied by Equation (\ref{continuedFraction}). This continued fraction has the same form as that considered by Dyson in his study of a random linear chain \cite{Dy,Fo}. There is, however, one significant novelty: in Dyson's case the $\alpha_j$ and $\beta_j$
represent random masses and Hooke constants, and so assume strictly positive values; by contrast, in the present case, the $\alpha_j$ and $\beta_j$ can assume negative values. This is problematic 
(and the situation gets worse as $\tau$ increases) because Equation (\ref{continuedFraction}) implies that the probability of dividing by a number in an interval centered on zero is proportional to the size of the interval.

In what follows, we shall {\em assume} that $\nu(dz)$ is absolutely continuous with respect to the Lebesgue measure, so that
\begin{equation}
\nu (dz) = f(z)\,dz
\label{probabilityDensity}
\end{equation}
for some unknown probability density function $f$ that is smooth, and see where this hypothesis leads us.

The smooth invariant density $f$ must belong to the representation space $V_{-2}$ and solve the so-called Dyson--Schmidt equation \cite{CTT}:
\begin{equation}
f = {\mathscr T}_0^\dag f = {\mathbb E} \left ( e^{-\beta N_{21}}  \right ) \circ {\mathbb E} \left ( e^{-\alpha N_{12}} \right ) f 
\label{vannesteDysonSchmidtEquation}
\end{equation}
where 
$$
N_{12} = \frac{d}{dz} \;\;\text{and}\;\; N_{21} = -\frac{d}{dz} z^2
$$ 
are the infinitesimal generators, listed in Table \ref{2dInfinitesimalGeneratorTable}, corresponding to the Gauss realisation with $\ell=-2$. It will be convenient, in the first instance, to normalise the solution by imposing
the condition
\begin{equation}
f(0) = 1\,.
\label{normalisationBeyondTheContinuumLimit}
\end{equation}
Equation (\ref{vannesteDysonSchmidtEquation}) may be expressed formally as
\begin{equation}
\notag
f = \sum_{j=0}^\infty \frac{{\mathbb E} \left ( \alpha^{j} \right )}{(j)!} N_{21}^{j} \circ \sum_{j=0}^\infty \frac{{\mathbb E} \left ( \beta^{j} \right )}{(j)!} N_{12}^{j} f\,.
\end{equation}
By using
$$
{\mathbb E} \left ( \frac{\alpha^m}{m!} \right )  = {\mathbb E} \left ( \frac{\beta^m}{m!} \right )
= \frac{\tau^m}{m!} \begin{cases} \binom{m}{m/2} & \text{if $m$ is even} \\
0 & \text{if $m$ is odd}
\end{cases}\,,
$$
we can eventually put it in the form
\begin{multline}
\notag
\frac{d}{dz} \left ( \frac{d}{dz} + \frac{d}{dz} z^2 \right ) f =: \left ( N_{12}^2 + N_{21}^2 \right ) f \\
= -\sum_{j=2}^\infty \tau^{2(j-1)} \sum_{k=0}^j \frac{1}{[k! (j-k)!]^2} N_{21}^{2(j-k)} N_{12}^{2k} f\,.
\end{multline}

Previously, we neglected the terms of $o ( \tau^2 )$ on the right-hand side. 
To go beyond this approximation, we look for a solution that 
 admits the asymptotic expansion
\begin{equation}
f(z) \sim \sum_{j=0}^\infty f_{j} (z) \,\tau^{2j} \;\;\text{as $\tau^2 \rightarrow 0+$}\,.
\label{asymptoticDensity}
\end{equation}
We then have the following explicit formulae for the first few coefficients:
$$
f_0 (z) := \frac{1}{\sqrt{1+z^4}}\,.
$$
$$
f_1 (z) = -\frac{5 z^{2}}{\left(z^{4}+1\right)^{7/2}}+\frac{11 z^{2}}{\left(z^{4}+1\right)^{5/2}}-\frac{9 z^{2}}{2 \left(z^{4}+1\right)^{3/2}}\,.
$$
\begin{multline}
\notag
f_2 (z) = -\frac{1155}{2 \left(z^{4}+1\right)^{13/2}}+\frac{4725}{2 \left(z^{4}+1\right)^{11/2}}-\frac{3605}{\left(z^{4}+1\right)^{9/2}} \\
+\frac{4985}{2 \left(z^{4}+1\right)^{7/2}}-\frac{5915}{8 \left(z^{4}+1\right)^{5/2}}+\frac{535}{8 \left(z^{4}+1\right)^{3/2}}\,.
\end{multline}
\begin{multline}
\notag
f_3(z) = \frac{425425 z^{2}}{2 \left(z^{4}+1\right)^{19/2}}-\frac{1076075 z^{2}}{\left(z^{4}+1\right)^{17/2}}+\frac{8826675 z^{2}}{4 \left(z^{4}+1\right)^{15/2}}\\
-\frac{9366005 z^{2}}{4 \left(z^{4}+1\right)^{13/2}}+\frac{10857703 z^{2}}{8 \left(z^{4}+1\right)^{11/2}}-\frac{1658003 z^{2}}{4 \left(z^{4}+1\right)^{9/2}} \\
+\frac{2784713 z^{2}}{48 \left(z^{4}+1\right)^{7/2}}-\frac{121535 z^{2}}{48 \left(z^{4}+1\right)^{5/2}}-\frac{5 z^{2}}{4 \left(z^{4}+1\right)^{3/2}}
\end{multline}
and so on.

With the help of the second formula in Proposition \ref{furstenbergProposition}, we can then compute the terms in the expansion 
$$
{c}_1  = \sum_{j=0}^\infty {c}_{1,j} \,\tau^{2j}
$$
of the first cumulant in powers of $\tau^2$.  The first few coefficients are 
\begin{equation}
\notag
{c}_{1,0} = 0\,,\;\;
{c}_{1,1} = \frac{1}{\pi} G^{-2}\,,\;\;
{c}_{1,2} = \frac{1}{24} + \frac{3}{8 \pi^2} G^{-4}\,,\;\;
{c}_{1,3} = \frac{5}{192 \pi} G^{-2} + \frac{9}{64 \pi^3} G^{-6}\,,
\end{equation}
where $G$ is the Gauss lemniscate constant; see Equation (\ref{gaussConstant}).

One can easily calculate many more terms. It turns out, however, that the series has a radius of convergence equal to zero, indicating that the limit $\tau \rightarrow 0+$ is singular, and casting some doubt on the existence of a smooth probability density.

\appendix

\section{Independence of the $\xi_{ij}$}
\label{propositionAppendix}
\begin{proposition}
Let the $u_{ij}$ be $2 \pi$-periodic, and let the $\eta_{ij}$, $1 \le i \ne j \le d$, be independent random variables distributed uniformly in $(-\pi,\pi)$. Then the joint characteristic function
$$
\chi (s) := {\mathbb E} \left [ \exp \left ( \text{\rm i} \sum_{i \ne j} s_{ij} \xi_{ij} \right ) \right ] 
$$
of the random variables $\xi_{ij} := u_{ij}' \left ( \phi_{ij} \right )$, $1 \le i \ne j \le d$, is given by the formula
$$
\chi (s) = \prod_{i \ne j} \chi_{ij} \left ( s_{ij} \right )
$$
where 
$$
\chi_{ij} \left ( s_{ij} \right ) := \frac{1}{2 \pi} \int_{-\pi}^\pi d \theta \exp \left [ \text{\rm i} s_{ij} u_{ij}' (\theta) \right ]\,.
$$
In particular, the $\xi_{ij}$ are independent.
\label{disorderProposition}
\end{proposition}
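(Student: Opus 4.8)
The plan is to establish the factorisation of the joint characteristic function $\chi(s)$ by exploiting the hierarchical dependence of the $\phi_{ij}$ on the underlying uniform variables $\eta_{ij}$, and then to read off independence from this factorisation in the usual way. The structural fact that drives everything is the following. Reading the recursive definitions (\ref{lowerPhiAngles}--\ref{upperPhiAngles}), one shows by induction on the first index, descending from $d$, that each $\phi_{mn}$ depends only on those $\eta_{m'n'}$ with first index $m' \ge m$: the correction $\delta t\sum_{k\ne n} u_{nk}(\phi_{nk})$ in (\ref{upperPhiAngles}) involves only $\phi_{nk}$ with first index $n > m$, and the recursion terminates because the first index strictly increases and cannot exceed $d$. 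It follows that for every pair $(i,j)$ one may write $\phi_{ij} = \eta_{ij} + c_{ij}$, where $c_{ij}$ is a deterministic function of the family $\{\eta_{mn}: m > i\}$ alone (with $c_{ij}\equiv 0$ when $j<i$); in particular $c_{ij}$ involves no variable from row $i$, so it is independent of $\{\eta_{in}: n\ne i\}$.

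Next I would compute $\chi(s)$ by peeling off one row at a time with the tower property. Grouping the product $\prod_{i\ne j}\exp(\mathrm{i}s_{ij}\xi_{ij})$ by rows and conditioning on rows $2,\dots,d$, the row-$1$ factors become $\exp[\mathrm{i}s_{1j}u_{1j}'(\eta_{1j}+c_{1j})]$ with the $c_{1j}$ frozen. Since the $\eta_{1j}$ are mutually independent and independent of the conditioning, the conditional expectation factorises over $j$, and each factor equals
$$\frac{1}{2\pi}\int_{-\pi}^\pi \exp\bigl[\mathrm{i}s_{1j}u_{1j}'(\theta+c_{1j})\bigr]\,d\theta = \frac{1}{2\pi}\int_{-\pi}^\pi \exp\bigl[\mathrm{i}s_{1j}u_{1j}'(\theta)\bigr]\,d\theta = \chi_{1j}(s_{1j}),$$
the decisive point being that the integral of a $2\pi$-periodic integrand over a full period is unchanged by the shift $\theta\mapsto\theta+c_{1j}$. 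This extracts the constant factor $\prod_{j\ne 1}\chi_{1j}(s_{1j})$. Because no $\phi_{ij}$ with $i\ge 2$ involves a row-$1$ variable, the residual expectation has the same form over the reduced index set $\{2,\dots,d\}$, and I would iterate---formally, an induction on the lowest row index still present---to reach $\chi(s)=\prod_{i\ne j}\chi_{ij}(s_{ij})$.

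Finally, independence is immediate from the factorisation: setting every $s_{mn}$ but one equal to zero shows that $\chi_{ij}$ is precisely the marginal characteristic function of $\xi_{ij}$, so $\chi$ is the product of the marginals, which is equivalent to mutual independence of the $\xi_{ij}$. I expect the only real obstacle to be the bookkeeping in the structural step---identifying exactly which $\eta$'s enter $c_{ij}$ and checking that row $i$ is excluded; once the shift-invariance of the period integral is secured, the conditioning argument is routine.
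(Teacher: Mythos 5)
Your proof is correct and takes essentially the same approach as the paper's: both rest on the triangular dependence structure ($\phi_{ij}=\eta_{ij}+c_{ij}$ with the shift $c_{ij}$ a function of higher-indexed rows only) and on the fact that the integral of the $2\pi$-periodic integrand over a full period is unchanged by that shift, applied by conditioning on (equivalently, integrating last over) the rows on which the shift depends. The only difference is one of completeness: the paper writes out the iterated integrals explicitly for $d=2$ and $d=3$ and dismisses general $d$ as ``tedious,'' whereas your descending induction on the row index together with the tower property supplies the general-$d$ bookkeeping cleanly.
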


\begin{proof}
We begin by considering the case $d=2$. By definition,
$$
\phi_{21} = \eta_{21}\;\;\text{and}\;\;
\phi_{12} := \eta_{12} + \delta t\, u_{21} (\phi_{21})\,.
$$
Hence
\begin{multline}
\chi (s) := {\mathbb E} \left [ \exp \left ( \text{\rm i} s_{12} \xi_{12} + \text{\rm i} s_{21} \xi_{21} \right ) \right ] \\
= \frac{1}{2 \pi} \int_{-\pi}^\pi d \eta_{21} \,e^{\text{\rm i} s_{21} u_{21}' (\eta_{21})} \frac{1}{2 \pi} \int_{-\pi}^{\pi} d \eta_{12} \,e^{\text{\rm i} s_{12} u_{12}' ( \phi_{12})} \,.
\notag
\end{multline}

Now,
\begin{multline}
\notag
 \frac{1}{2 \pi} \int_{-\pi}^{\pi} d \eta_{12} \,e^{\text{\rm i} s_{12} u_{12}' ( \phi_{12})}  \overset{\underset{\downarrow}{\phi_{12} = \eta_{12} + \delta t\, u_{21} (\phi_{21})}}{=} \int_{-\pi +  \delta t\, u_{21} (\phi_{21})}^{\pi +  \delta t\, u_{21} (\phi_{21})}
 d \phi_{12} \, e^{\text{\rm i} s_{12} u_{12}' ( \phi_{12})} \\
 =  \int_{-\pi}^{\pi}
 d \phi_{12} \, e^{\text{\rm i} s_{12} u_{12}' ( \phi_{12})}
\end{multline}
since the integrand is $2\pi$-periodic. The desired result follows.

For the case $d=3$, we have 
$$
\xi_{ij} = u_{ij}' \left (  \phi_{ij} \right )\,,
$$
$$
\phi_{21} = \eta_{21}\,,\;\;\phi_{31} = \eta_{31}\,,\;\;\phi_{32} = \eta_{32}
$$
and
\begin{align*}
\phi_{13} &=  \eta_{13} + \delta t \,u_{31} \left ( \phi_{31} \right ) + \delta t  \,u_{32} \left ( \phi_{32} \right ) \\
\phi_{23} &=  \eta_{23} + \delta t \,u_{31} \left ( \phi_{31} \right ) + \delta t \,u_{32} \left ( \phi_{32}\right ) \\
\phi_{12} &= \eta_{12} + \delta t \,u_{21} \left ( \phi_{21} \right ) + \delta t \,u_{23} \left ( \phi_{23} \right )\,.
\end{align*}
Hence
\begin{multline}
\notag
\chi (s) := \prod_{i>j} \frac{1}{2 \pi} \int_{-\pi}^\pi d \eta_{ij} \,e^{\text{\rm i} s_{ij} u_{ij}' ( \eta_{ij})} \\
\times \frac{1}{2 \pi} \int_{-\pi}^\pi d \eta_{13} \,e^{\text{\rm i} s_{13} u_{13}' ( \phi_{13})}   \frac{1}{2 \pi} \int_{-\pi}^\pi d \eta_{23} \,e^{\text{\rm i} s_{23} u_{23}' ( \phi_{23})} \frac{1}{2 \pi} \int_{-\pi}^\pi d \eta_{12} \,e^{\text{\rm i} s_{12} u_{12}' ( \phi_{12})}   \,.
\end{multline}
For the innermost integral, we have,
\begin{multline}
\notag
 \frac{1}{2 \pi} \int_{-\pi}^{\pi} d \eta_{12} \,e^{\text{\rm i} s_{12} u_{12}' ( \phi_{12})}  \\
 \overset{\underset{\downarrow}{\phi_{12} = \eta_{12} + \delta t\, u_{21} (\phi_{21})+ \delta t\, u_{23} (\phi_{23})}}{=} \int_{-\pi +  \delta t\, u_{21} (\phi_{21})+ \delta t\, u_{23} (\phi_{23})}^{\pi +  \delta t\, u_{21} (\phi_{21})+ \delta t\, u_{23} (\phi_{23})}
 d \phi_{12} \, e^{\text{\rm i} s_{12} u_{12}' ( \phi_{12})} \\
 =  \int_{-\pi}^{\pi}
 d \phi_{12} \, e^{\text{\rm i} s_{12} u_{12}' ( \phi_{12})}
\end{multline}
since the integrand is $2\pi$-periodic. For the same reason, the substitution $\phi_{i3} =   \eta_{i3} + \delta t \,u_{31} \left ( \phi_{31} \right ) + \delta t  \,u_{32} \left ( \phi_{32} \right )$ produces
$$
\frac{1}{2 \pi} \int_{-\pi}^\pi d \eta_{i3} \,e^{\text{\rm i} s_{i3} u_{i3}' ( \phi_{i3})} = \frac{1}{2 \pi} \int_{-\pi}^\pi d \phi_{i3} \,e^{\text{\rm i} s_{i3} u_{i3}' ( \phi_{i3})}\,,\;\;i \in \{1,\,2\}\,,
$$
and the desired result follows. The same line of proof may be used for other values of $d$, but the details are tedious.
\end{proof}

\section{Formulae for the $A_{ij} \,e_{lm}$}
\label{3dAppendix}
We introduce the sequences
\begin{equation}
\notag
a_{l} = \frac{\ell+2l+1}{4 (4l-1)(4l+1)}\,,\;\;
b_{l} = -\frac{2 \ell+3}{4(4l-1)(4l+3)}\,,\;\;
c_{l} = \frac{\ell-2l}{4(4l+1)(4l+3)} \,,
\end{equation}
\begin{multline}
\notag
{\mathfrak a}_{lm} = -6 \,a_l \left ( 2 l + 2m -1 \right ) \left ( 2 l+2m \right ), \;\;
{\mathfrak b}_{lm} = 4 \,b_l \left ( 2 l^2+l-6m^2\right ),\\
{\mathfrak c}_{lm} = -6\,c_l \left ( 2 l - 2m+1 \right ) \left ( 2l -2 m+2 \right ),
\end{multline}
\begin{multline}
\notag
\alpha_{lm} = a_l \frac{(2l+2m)!}{(2l+2m-4)!}\,,\;\;
\beta_{lm} = b_l \frac{(2l+2m)!}{(2l-2m)!} \frac{(2l-2m+2)!}{(2l+2m-2)!}\,, \\
\gamma_{lm} = c_l \frac{(2l-2m+4)!}{(2l-2m)!}\,.
\end{multline}

With the help of the computer algebra software {\tt Maple}, we then obtain  the following formulae, conjectured on the basis of explicit calculations using moderate values of $l$, $m$, and then confirmed by testing over a wide range of values.

\subsection{Formula for $A_{12} \,e_{lm}$}
\label{a12Subsection}
$$
A_{12} \,e_{l0} = \begin{cases}
4 \,c_0 \,e_{11} & \text{if $l=0$} \\
4 \left ( b_1 \,e_{11} + c_1\,e_{21} \right ) & \text{if $l=1$} \\
4 \left ( a_l \, e_{l-1,1} + b_l \,e_{l1} +c_l \,e_{l+1,1} \right ) & \text{if $l>1$}
\end{cases}
$$
\begin{multline}
\notag
A_{12} \,e_{lm} = 
2 \left ( \alpha_{lm} \,e_{l-1,m-1}  + \beta_{lm} \,e_{l,m-1}   + \gamma_{lm}\,e_{l+1,m-1} \right . \\
\left . +a_l  \,e_{l-1,m+1} + b_l\,e_{l,m+1} +c_l \,e_{l+1,m+1} \right )\,,\;\;\text{if $0 < m < l-1$}\,,
\end{multline}
\begin{multline}
\notag
A_{12} \,e_{lm} = 
2 \left ( \alpha_{lm} \,e_{l-1,m-1}  + \beta_{lm} \,e_{l,m-1}   + \gamma_{lm}\,e_{l+1,m-1} \right . \\
\left . +b_l \,e_{l,m+1} + c_l \,e_{l+1,m+1} \right ) \,,\;\;\text{if $0 < m = l-1$}\,,
\end{multline}
and, if $0< m=l$,
\begin{equation}
\notag
A_{12} \,e_{lm} = 
2 \left ( \alpha_{lm} \,e_{l-1,m-1}  + \beta_{lm} \,e_{l,m-1}   + \gamma_{lm}\,e_{l+1,m-1} 
+c_l \,e_{l+1,m+1} \right ) \,.
\end{equation}

\subsection{Formula for $A_{13} \,e_{lm}$}
\label{a13Subsection}
$$
A_{13} \,e_{l0} = \begin{cases}
{\mathfrak c}_{00}\, e_{10} + 2 \,c_0 \,e_{11} & \text{if $l=0$} \\
{\mathfrak a}_{10} \,e_{00}+{\mathfrak b}_{10} \,e_{10}+{\mathfrak c}_{10} \,e_{20}+ 2 \left ( b_1 \,e_{11} + c_1\,e_{21} \right ) & \text{if $l=1$} \\
{\mathfrak a}_{l0} \,e_{l-1,0}+{\mathfrak b}_{l0} \,e_{l0}+{\mathfrak c}_{l0} \,e_{l+1,0}+ 2 \left ( a_l \, e_{l-1,1} + b_l \,e_{l1} +c_l \,e_{l+1,1} \right ) & \text{if $l>1$}
\end{cases}
$$
\begin{multline}
\notag
A_{13} \,e_{lm} = 
\alpha_{lm} \,e_{l-1,m-1}  + \beta_{lm} \,e_{l,m-1}   + \gamma_{lm}\,e_{l+1,m-1} \\
+ {\mathfrak a}_{lm} \,e_{l-1,m} + {\mathfrak b}_{lm} \,e_{lm} + {\mathfrak c}_{lm} \,e_{l+1,m}  \\
+a_l  \,e_{l-1,m+1} + b_l\,e_{l,m+1} +c_l \,e_{l+1,m+1}\,,\;\;\text{if $0 < m < l-1$}\,,
\end{multline}
\begin{multline}
\notag
A_{13} \,e_{lm} = 
\alpha_{lm} \,e_{l-1,m-1}  + \beta_{lm} \,e_{l,m-1}   + \gamma_{lm}\,e_{l+1,m-1}  \\
+ {\mathfrak a}_{lm} \,e_{l-1,m} + {\mathfrak b}_{lm} \,e_{lm} + {\mathfrak c}_{lm} \,e_{l+1,m}  
+b_l \,e_{l,m+1} + c_l \,e_{l+1,m+1} \,,\;\;\text{if $0 < m = l-1$}\,,
\end{multline}
and
\begin{multline}
\notag
A_{13} \,e_{lm} = 
\alpha_{lm} \,e_{l-1,m-1}  + \beta_{lm} \,e_{l,m-1}   + \gamma_{lm}\,e_{l+1,m-1} \\
+ {\mathfrak b}_{lm} \,e_{lm} + {\mathfrak c}_{lm} \,e_{l+1,m} + c_l \,e_{l+1,m+1} \,,\;\;\text{if $0 < m = l$}\,.
\end{multline}

\subsection{Formula for $A_{23} \,e_{lm}$}
\label{a23Subsection}
$$
A_{23} \,e_{l0} = \begin{cases}
{\mathfrak c}_{00}\, e_{10} - 2 \,c_0 \,e_{11} & \text{if $l=0$} \\
{\mathfrak a}_{10} \,e_{00}+{\mathfrak b}_{10} \,e_{10}+{\mathfrak c}_{10} \,e_{20}- 2 \left ( b_1 \,e_{11} + c_1\,e_{21} \right ) & \text{if $l=1$} \\
{\mathfrak a}_{l0} \,e_{l-1,0}+{\mathfrak b}_{l0} \,e_{l0}+{\mathfrak c}_{l0} \,e_{l+1,0}- 2 \left ( a_l \, e_{l-1,1} + b_l \,e_{l1} + c_l \,e_{l+1,1} \right ) & \text{if $l>1$}
\end{cases}
$$
\begin{multline}
\notag
A_{23} \,e_{lm} = 
-\alpha_{lm} \,e_{l-1,m-1}  - \beta_{lm} \,e_{l,m-1}   - \gamma_{lm}\,e_{l+1,m-1} \\
+ {\mathfrak a}_{lm} \,e_{l-1,m} + {\mathfrak b}_{lm} \,e_{lm} + {\mathfrak c}_{lm} \,e_{l+1,m}  \\
- a_l  \,e_{l-1,m+1} - b_l\,e_{l,m+1} - c_l \,e_{l+1,m+1}\,,\;\;\text{if $0 < m < l-1$}\,,
\end{multline}
\begin{multline}
\notag
A_{23} \,e_{lm} = 
-\alpha_{lm} \,e_{l-1,m-1}  - \beta_{lm} \,e_{l,m-1} - \gamma_{lm}\,e_{l+1,m-1}  \\
+ {\mathfrak a}_{lm} \,e_{l-1,m} + {\mathfrak b}_{lm} \,e_{lm} + {\mathfrak c}_{lm} \,e_{l+1,m}  
- b_l \,e_{l,m+1} - c_l \,e_{l+1,m+1} \,,\;\;\text{if $0 < m = l-1$}\,,
\end{multline}
and
\begin{multline}
\notag
A_{23} \,e_{lm} = 
-\alpha_{lm} \,e_{l-1,m-1}  - \beta_{lm} \,e_{l,m-1} - \gamma_{lm}\,e_{l+1,m-1} \\
+ {\mathfrak b}_{lm} \,e_{lm} + {\mathfrak c}_{lm} \,e_{l+1,m} - c_l \,e_{l+1,m+1} \,,\;\;\text{if $0 < m = l$}\,.
\end{multline}

\bibliographystyle{amsplain}

\end{document}